\newcommand{\bea}{\begin{eqnarray}}
	\newcommand{\eea}{\end{eqnarray}}
\newcommand{\be}{\begin{equation}}
	\newcommand{\ee}{\end{equation}}
\theoremstyle{plain}
\newtheorem{theorem}{Theorem}
\newtheorem{lemma}[theorem]{Lemma}
\theoremstyle{definition}
\newtheorem{definition}[theorem]{Definition}
\newcommand{\beas}{\begin{eqnarray*}}
	\newcommand{\eeas}{\end{eqnarray*}}
\numberwithin{equation}{section}
\numberwithin{theorem}{section}
\begin{document}
	
	\centerline{\Large {\bf Stratified Lie systems: Theory and applications}}
	\vskip 0.5cm
	%\centerline{\Large {\bf for partial differential equations}} \vskip 0.75cm
	
	\centerline{ J.F. Cari\~nena$^\dagger$, J. de Lucas$^\ddagger$, and D. Wysocki$^\ddagger$}
	\vskip 0.5cm
	
	\centerline{$^\dagger$Departamento de  F\'{\i}sica Te\'orica, Universidad de Zaragoza,}
	\medskip
	\centerline{P. Cerbuna 12, 50.009 Zaragoza, Spain.}
	\medskip
	\centerline{$^\ddagger$Department of Mathematical Methods in Physics, University of Warsaw,}
	\medskip
	\centerline{ul. Pasteura 5, 02-093 Warszawa, Poland}
	\medskip
	%\centerline{$10^{\rm st}$ March 2010}
	
	\vskip 1cm
	
	\begin{abstract}{
			A {\it stratified Lie system} is a nonautonomous system of first-order ordinary differential
			equations on a manifold $M$ described by a $t$-dependent vector field  $X=\sum_{\alpha=1}^rg_\alpha X_\alpha$, where $X_1,\ldots,X_r$ are vector fields on $M$ spanning an $r$-dimensional Lie algebra that are tangent to the strata of a stratification $\mathcal{F}$ of $M$ while $g_1,\ldots,g_r:\mathbb{R}\times M\rightarrow \mathbb{R}$ are functions depending on $t$ that are constant along integral curves of $X_1,\ldots,X_r$ for each fixed $t$. %Particular solutions contained in a stratum of $\mathcal{F}$ can be written as a certain function, a so-called {\it foliated superposition rule}, of a family of particular solutions of the system within the same strata and several parameters. 
			We analyse the particular solutions 
			of stratified Lie systems and how their properties can be obtained as generalisations of those of Lie systems. We illustrate our results by studying Lax
			pairs and a class of $t$-dependent Hamiltonian systems. We study stratified Lie systems with compatible geometric structures. In particular, a class of stratified Lie systems on Lie algebras are studied via Poisson structures induced by $r$-matrices.
		}
		
		\bigskip\noindent
		\textit{{\bf MSC 2000:} 34A26, 34A05, 34A34 (primary) 17B66, 22E70 (secondary)}
		
		\medskip\noindent
		\textit{{\bf PACS numbers:} 02.30.Hq, 02.30.lk, 02.40.-k}
		
		\medskip\noindent
		\textit{{\bf Key words:} integrable system, superposition rule, Lax pair, Lie system, Poisson structure, $r$-matrix.}
		
	\end{abstract}

	\section{Introduction}
	A {\it Lie system} is a nonautonomous  system of first-order ordinary differential equations (ODEs) in normal form
	whose general solution can be written as a function, a so-called
	{\it superposition rule}, of a family of particular solutions and
	some constants related to initial conditions \cite{CGM00,CGM07,Dissertationes,LS,PW}. The Lie--Scheffers theorem \cite{CGM07,LS,PW,LS20} states that a Lie system
	amounts to a $t$-dependent vector field taking values in a finite-dimensional Lie algebra
	of vector fields, called the {\it Vessiot--Guldberg Lie algebra}  of the Lie system.
	
	By applying the Lie--Scheffers theorem, Lie proved that
	every Lie system on the real line is locally diffeomorphic, around a generic point, to
	a  Riccati differential equation\footnote{Some works additionally assume that $a_1(t)a_3(t)$ must not be equal to zero for every $t\in \mathbb{R}$ (see \cite{Ri24}).}
	$$
	\frac{dx}{dt}=a_1(t)+a_2(t)x+a_3(t)x^2,
	$$
	where $a_1(t),a_2(t),a_3(t)$ are arbitrary $t$-dependent functions \cite{Dissertationes,LS,LS20}.
	Although Lie also classified all finite-dimensional Lie algebras of vector fields on the plane around a generic point up to local diffeomorphisms, his results presented several unclear points (see \cite{GKO92}).
	Gonz\'alez--L\'opez, Kamran, and Olver clarified Lie's classification and, as a result, they proved that there exist 28 families of finite-dimensional Lie algebras of vector fields on $\mathbb{R}^2$, the hereafter called {\it GKO classification}. From their results and the Lie--Scheffers theorem, one can classify, up to local diffeomorphisms, Lie systems at generic points of the plane \cite{BBHLS15,GKO92}. 
	
	Previous facts illustrate that most systems of differential equations are not Lie
	systems \cite{CGL08,Dissertationes}. Notwithstanding, Lie systems have
	a plethora of geometric properties and relevant applications \cite{CGM00,Dissertationes,LS20,LV15,PW},
	e.g. matrix Riccati equations are Lie systems appearing in
	the study of B\"acklund transformations and other fields \cite{CR08,CRF01,OlmRodWin86,OlmRodWin87,PW}, which motivates their analysis.
	
	The theory of Lie systems has been extended in different manners to analyse much more general families of systems of differential equations. {\it PDE Lie systems} \cite{CGM07,OG00,Ra11} were applied to the study of conditional symmetries and B\"acklund transformations  \cite{CGL19,LG18}. {\it Quasi-Lie schemes} and {\it quasi-Lie systems}
	were developed to investigate integrability conditions for systems of ODES and partial differential equations (PDEs), e.g. dissipative Milne-Piney equations and nonlinear oscillators \cite{CGL08,CGL19,CLL09}. Superposition rules for discrete differential equations were considered by Winternitz and his collaborators \cite{PW04,RW84,RSW97}. Super-superposition rules, which are aimed at the analysis of general solutions to superdifferential equations, were  analysed in \cite{BGHW87,BGHW90}. A detailed survey on the previous and other generalisations of the theory of Lie systems can be found in \cite{Dissertationes,LS20}.

	This work focuses on another generalisation of Lie systems
	that has been scarcely analysed so far: the  {\it foliated Lie systems} \cite{CGM00}, which are here more properly called {\it stratified Lie systems}. 	Recall that a {\it stratification} of a manifold $N$ is a partition of $N$ into connected disjoint immersed submanifolds, the so-called {\it strata}, of not necessarily the same dimension \cite{OR04,St80,Su73,Va94}. If all the strata have the same dimension, the stratification is said to be {\it regular}. In this case, it is said that the stratification is a {\it foliation} and its strata are called {\it leaves}. 
	
		A {\it stratified Lie system}  %In this work, we define a {\it foliated Lie system} as 
	is a nonautonomous  system
	of first-order ODEs in normal form describing the integral curves of a $t$-dependent vector field on a manifold $N$ of the form
	\begin{equation}\label{De}
		X(t,x) = \sum_{\alpha=1}^rg_\alpha(t,x)X_\alpha(x),\qquad \forall t\in \mathbb{R},\quad\forall x\in N,
	\end{equation}
	where  $X_1,\ldots,X_r$ are vector fields on $N$ that span an $r$-dimensional Lie algebra $V$, i.e. $[X_\alpha,X_\beta]=\sum_{\gamma=1}^rc_{\alpha\beta\gamma}X_\gamma$ for certain real constants $c_{\alpha\beta\gamma}$ with $\alpha,\beta,\gamma=1,\ldots,r$, and they therefore span an integrable Stefan--Sussmann distribution $\mathcal{D}^V=\{Y(x):Y\in V,x\in N\}$ on $N$ (see \cite{La18,St80,Su73,Va94} for details), while
	$g_1,\ldots,g_r\in C^\infty(\mathbb{R}\times N)$ are common $t$-dependent constants of motion of  $X_1,\ldots,X_r$, namely if we consider $X_1,\ldots,X_r$  as vector fields on $\mathbb{R}\times N$ in the natural way  \cite{Dissertationes}, then $X_\alpha g_\beta=0$ for $\alpha,\beta=1,\ldots,r$.  Finally, if $g_1,\ldots,g_r$ depend only on time, (\ref{De}) is called a {\it Lie system}.

An integrable Stefan-Sussmann distribution on $N$, like $\mathcal{D}^V$ (see \cite{La18,St80,Su73,Va94}), gives rise to a stratification so that the tangent bundles to their strata are determined by the Stefan-Sussman distribution. If $\mathcal{F}$ is the stratification by integral submanifolds induced by $\mathcal{D}^V$, we call $X$ and $V$ an {\it $\mathcal{F}$-stratified Lie system} and a {\it Vessiot--Guldberg Lie algebra} of $X$, respectively. The elements of $V$ are tangent to the leaves of $\mathcal{F}$. It follows from (\ref{De}) that each vector field $X_t:x\in N\mapsto X(t,x)\in TN$, for every fixed $t\in \mathbb{R},$ is also tangent to the strata of $\mathcal{F}$. Hence, the restriction of $X$ to any stratum of $\mathcal{F}$ is a Lie system admitting as a Vessiot--Guldberg Lie algebra the restriction of $V$ to the stratum.
	
Now it is clear that although our definition of stratified Lie system matches what is called a {\it foliated Lie system} in \cite{CGM00}, the change of terminology is due to the fact that (\ref{De}) is, more precisely, associated with a stratification than with a foliation. In fact, the latter only occurs when the strata of the stratification of $\mathcal{D}^V$ have all the same dimension. Anyhow, we shall show that assuming that $\mathcal{D}^V$ gives rise to a foliation is a very mild condition and allows for studying relevant problems while avoiding technical minor details. In fact, most results in this paper are proved for stratified Lie systems whose associated stratification is a foliation.
	   
	The work \cite{CGM00} provided a few applications of stratified Lie systems to the theory of integrable systems and many other theoretical examples. Apart from \cite{LZ21}, where stratified Lie systems are applied to describe relative equilibrium points of a $t$-dependent energy-momentum method, no new result on stratified Lie systems seems to have been analysed in the literature. Meanwhile, our work provides new theoretical results and physical applications of stratified Lie systems. 
	
	%In this work we assume a slightly more general form for $X$additionally require the vector fields $X_1,\ldots,X_r$ to span a regular distribution $\mathcal{D}$ on $N$. Note that an $r$-dimensional Lie algebra of vector fields on a manifold $N$ span a regular distribution almost everywhere on $N$ (cf. \cite{Va94}). Therefore, the definition of foliated Lie system in \cite{CGM00} matches ours at a generic point of $N$. 

	We prove that foliated Lie systems appear naturally while transforming a $t$-dependent Hamiltonian system onto a new simpler one through a $t$-dependent canonical transformation \cite{Va15}. It is also shown that such foliated Lie systems admit a Lax pair formulation, providing a nonautonomous generalisation of results given by Babelon and Viallet in \cite{BV90}. 
	
	Then, we define an $\mathcal{F}$-{\it foliated superposition rule} notion for a nonautonomous system of first-order ODEs in normal form on $N$ as a function $\Psi:N^{m+1}\rightarrow N$ such that $\Psi(\mathcal{F}_k^{m+1})\subset \mathcal{F}_k$ for every leaf $\mathcal{F}_k$ of $\mathcal{F}$, and $\Psi$ allows us to describe the particular solutions of the system passing through any leaf of $\mathcal{F}$ in terms of a generic  family of $m$ particular solutions contained in the same leaf and a parameter in $\mathcal{F}_k$ related to the initial conditions of each particular solution in $\mathcal{F}_k$. As an application, we provide foliated superposition rules for certain Lax pairs (understood as systems of ODEs in normal form in a natural way) and a class of $t$-dependent Hamiltonian systems.
	
		It is here proved that a foliated Lie system  admits a foliated superposition rule. As a byproduct, we provide an analogue of the Lie's condition for foliated Lie systems.
	We also devise a method to obtain foliated superposition rules by means of a modification of the technique 
	developed in \cite{CGM07} to derive superposition rules for Lie systems. Additionally, our work studies the properties of first-order systems of ordinary differential equations in normal form admitting a foliated superposition rule. 
	Next, we prove that solving a foliated Lie system reduces to integrating a type of foliated Lie system on a trivial principal bundle, called an  {\it automorphic foliated Lie system}.
	In turn, given an automorphic foliated Lie system on the total space $P$ of a trivial principle bundle $\pi:P=G\times N\rightarrow N$% related to a Lie group action $\vartheta:(g;(h,x))\in G\times P\mapsto (gh,x)\in P$
	, a Lie group action $\varphi:G\times M\rightarrow M$ whose set of orbits, $M/G$, can be endowed with a differentiable structure diffeomorphic to  $N$, i.e. $M/G
	\simeq N$,  allows one to construct a stratified Lie system on $M$ whose general solution can be determined by  a particular solution of the automorphic foliated Lie system within each fibre 
	of the principal bundle on $P$. These results constitute the generalisation to stratified Lie systems of the relations between standard Lie systems and automorphic Lie systems appearing in  \cite{CGM00,Dissertationes,LS20,Ve93,Ve99}.

	Finally, foliated Lie systems are employed to provide a  new generalisation of Ermakov systems admitting a Lewis-Riesenfeld invariant \cite{Ra80,RR79,RR80}. We also prove that a class of Lax pairs and their associated $t$-dependent Hamiltonian systems are related to the same automorphic foliated Lie system. As a last application, it is shown how $r$-matrices and several associated Poisson brackets on Lie algebras can be applied to the study of foliated Lie systems related to Lax pairs and automorphic foliated Lie systems. Moreover, $r$-matrices are also utilised to study certain stratified Lie systems. These examples allow us to define the so-called {\it stratified Lie--Hamilton systems}, which generalise standard {\it Lie--Hamilton systems} \cite{CLS13}. The generalisation of the theory of Lie systems with compatible geometric structures (see \cite{LS20}) can be developed analogously.
	
	The structure of the paper goes as follows. In Section 2 we survey
	the theory of Lie systems. Section 3 introduces the definition of stratified and foliated
	Lie systems by providing several new examples. 
	Examples of foliated Lie systems are studied in Section 4. Section 5 introduces foliated superposition rules, while Section 6 shows that foliated Lie systems admit a foliated
	superposition rule, gives an algorithm to derive it, and it analyses the properties of a system admitting a foliated superposition rule. This can be understood as the extension to foliated Lie systems of the Lie--Scheffers theorem. We define automorphic foliated Lie systems and explain how they can be used to solve foliated Lie systems in Section \ref{Sec:AFLS}. Section 8 develops several applications of our methods. In Section 9
	we summarise our achievements and describe some further work in progress.
	
	\section{Fundamentals of Lie systems}\label{Intro}
	Let us survey the basic theory of Lie systems and related notions needed to understand the results of our paper.
	To simplify our presentation and to stress our main
	results, we assume all mathematical structures to be smooth and globally defined (see \cite{CGM00,Dissertationes,LS20} for further details). If not otherwise stated, every
 differential equation is hereafter considered nonautonomous and every manifold is connected. In what follows, $N$ is an $n$-dimensional manifold.
	
	We call {\it generalised} or {\it Stefan-Sussmann distribution} $\mathcal{D}$ on $N$ a correspondence mapping each point $x\in N$ to a subspace $\mathcal{D}_x\subset T_xN$ (see \cite{La18,OR04,St80,Su73} for details). If the dimension of $\mathcal{D}_x$ is the same at every  $x\in N$, then $\mathcal{D}$ is called {\it regular}. In some works, a regular Stefan-Sussmann distribution is simply called a {\it distribution}. If there exists a decomposition of $N$ as a sum of disjoint connected immersed submanifolds $\mathcal{F}_\lambda$ such that  $T_x\mathcal{F}_\lambda=\mathcal{D}_x$ for every $x\in \mathcal{F}_
\lambda$ an each $\mathcal{F}_\lambda$, then $\mathcal{D}$ is said to be {\it integrable}. To simplify the terminology, generalised distributions will just be hereafter called  distributions.
	
	Let us define $\pi_2:(t,x)\in \mathbb{R}\times N\mapsto x\in N$ and let  $\tau_N:TN\rightarrow N$ be the tangent bundle projection.
	A {\it $t$-dependent vector field} on $N$ is a mapping $X:(t,x)\in \mathbb{R}\times N\mapsto X(t,x)\in TN$ such that $\tau_N\circ X=\pi_2$.  An {\it integral curve} of $X$ is a particular solution $\gamma:\mathbb{R}\rightarrow N$ of
	\begin{equation}\label{Asso}
		\frac{dx}{dt}=X(t,x),\qquad \forall (t,x)\in \mathbb{R}\times N.
	\end{equation}
	Consequently, $\widetilde{\gamma}:t\in \mathbb{R}\mapsto (t,\gamma(t))\in \mathbb{R}\times N$ is an integral curve of the {\it autonomisation} (or {\it suspension}) of $X$, i.e. the vector field  $\widetilde{X}$ on $\mathbb{R}\times N$ given by $\widetilde{X}=\partial_t+X$, where we use  the natural diffeomorphism $\widetilde{\varphi}:T(\mathbb{R}\times N)\simeq T\mathbb{R}\times TN$ \cite{AM87,Dissertationes}. The other way around, if $\widetilde{\gamma}:\mathbb{R}\rightarrow \mathbb{R}\times N$ is an integral curve of $\widetilde{X}$ and a section of the bundle $\pi_1:(t,x)\in \mathbb{R}\times N\mapsto t\in \mathbb{R}$, then $\pi_2\circ \widetilde{\gamma}$ is a solution to (\ref{Asso}). This one-to-one correspondence permits us to identify system  (\ref{Asso}) and its associated $t$-dependent vector field $X$. In turn, this allows us to simplify the notation.

	Every $t$-dependent vector field $X$ on $N$ gives rise to a family of standard vector fields on $N$ of the form  $\{X_t:x\in N\mapsto X(t,x) \in TN\}_{t \in \mathbb{R}}$. The {\it smallest Lie algebra} of $X$ is the smallest (in the sense of inclusion) Lie algebra of vector fields on $N$, let us say $V^X$, including the family of vectors $\{X_t\}_{t\in \mathbb{R}}$. If $V$ is a Lie algebra of vector fields on $N$, we write $\mathcal{D}^{V}$ for the distribution on $N$ spanned by the vector fields of $V$. Every  distribution $\mathcal{D}^{V}$ is regular in the connected components of a dense open subset of $N$ \cite{LS20,Va94}. In particular, if $V$ is finite-dimensional, then $\mathcal{D}^V$ is integrable (cf. \cite{La18,St80,Su73}).

	Each $t$-dependent vector field $X$ on a manifold $N$ can be considered as a vector field on $\mathbb{R}\times N$ via $\widetilde{\varphi}:T(\mathbb{R}\times N)\simeq T\mathbb{R}\times TN$. Moreover,  every $f\in C^\infty(\mathbb{R}
	\times N)$ can be considered as a $t$-parametrised family of functions $f_t\in C^\infty(N)$, with $t\in \mathbb{R}$, of the form $f_t:x\in N \mapsto f(t,x)\in\mathbb{R}$. Consequently, if $X$ is a vector field on $N$ and $f\in C^\infty(\mathbb{R}\times N)$,  we can understand $Xf$ as the function on $\mathbb{R}\times N$ such that $(Xf)_t=Xf_t$ for every $t\in \mathbb{R}$. Hence, a {\it $t$-dependent constant of motion} of $X$ is an  $f\in C^\infty(\mathbb{R}\times N)$ such that $Xf=0$.
	
	A {\it superposition rule} \cite{CGM07,Dissertationes,PW} for a system $X$ on a manifold $N$ is a map $\Psi:N^m\times N\rightarrow N$ satisfying that the general solution, $x(t)$, to $X$ can be written as
	$$
	x(t)=\Psi(x_{(1)}(t),\ldots ,x_{(m)}(t),k),
	$$
	for a generic family of particular solutions $x_{(1)}(t),\ldots,x_{(m)}(t)$  of $X$ and a parameter $k\in N$ to be related to the initial condition of $X$. We call {\it Lie system} a system of first-order ODEs admitting a superposition rule \cite{CGM07,LS,Dissertationes,PW}.
	
	\begin{theorem}{\bf (The Lie--Scheffers theorem \cite{CGM00,CGM07,LS,PW})} A system $X$ on $N$ admits a
		superposition rule if and only if 
		$
		X={\displaystyle \sum_{\alpha=1}^r}b_\alpha(t)X_\alpha
		$
		for a certain family $X_1,\ldots,X_r$ of vector fields on $N$ spanning an $r$-dimensional Lie algebra of vector fields, a so-called {\it Vessiot-Guldberg Lie algebra} of $X$, and
		a family $b_1(t),\ldots,b_r(t)$  of $t$-dependent functions.
	\end{theorem}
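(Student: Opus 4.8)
The plan is to prove both implications via the classical Lie reduction to the diagonal prolongation. For the converse direction (a Lie system admits a superposition rule), suppose $X=\sum_{\alpha=1}^rb_\alpha(t)X_\alpha$ with $X_1,\dots,X_r$ spanning a finite-dimensional Lie algebra $V$. First I would introduce the diagonal prolongation $\widetilde{X}^{[m+1]}$ to $N^{m+1}$, namely the $t$-dependent vector field whose value at $(x_{(0)},x_{(1)},\dots,x_{(m)})$ is $\sum_{\alpha=1}^rb_\alpha(t)\bigl(X_\alpha(x_{(0)}),\dots,X_\alpha(x_{(m)})\bigr)$; this again takes values in a finite-dimensional Lie algebra, the prolongation $\widetilde V$. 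The key point is that for $m$ large enough (so that the prolonged vector fields $X_\alpha^{[m+1]}$ become pointwise linearly independent on a dense open set) the distribution spanned by $\widetilde V$ admits, locally, a basis of $r\cdot\dim N - \dim V$ common first integrals $F_1,\dots$ which are moreover functionally independent of the coordinates on the last $N$ factor. Freezing the first $m$ arguments along particular solutions $x_{(1)}(t),\dots,x_{(m)}(t)$ and using that the $F_i$ are constant along prolonged integral curves, one solves $F_i(x_{(0)},x_{(1)}(t),\dots,x_{(m)}(t))=k_i$ by the implicit function theorem for $x_{(0)}$ as a function $\Psi$ of $x_{(1)}(t),\dots,x_{(m)}(t)$ and the constants $k=(k_i)$; this $\Psi$ is the sought superposition rule, and one checks it is independent of $t$ precisely because it is built from $t$-independent first integrals of the prolongation.

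For the direct implication, assume $X$ admits a superposition rule $\Psi:N^m\times N\to N$. Differentiating the identity $x_{(0)}(t)=\Psi(x_{(1)}(t),\dots,x_{(m)}(t),k)$ with respect to $t$ and using that each $x_{(i)}(t)$ is an integral curve of $X$, one obtains that the graph map sending the initial data of the $m+1$ solutions into $N^{m+1}$ conjugates $\widetilde X^{[m+1]}$ to a vector field that is projectable under the map $(x_{(0)},\dots,x_{(m)})\mapsto(x_{(1)},\dots,x_{(m)})$; equivalently, $\Psi$ being defined on a full neighbourhood forces the level sets of the "constant" $k$ to be preserved by $\widetilde X^{[m+1]}$, so $\widetilde X^{[m+1]}$ has $\dim N$ independent common first integrals adapted to $\Psi$. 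From this one extracts that $X_t$, for every $t$, lies in the (finite-dimensional) space of vector fields $Y$ on $N$ whose prolongation $Y^{[m+1]}$ annihilates those first integrals; since $N^{m+1}$ is finite dimensional this space is finite dimensional, and a standard argument shows it is closed under the Lie bracket. Hence $V^X$ is finite dimensional, and writing $X_t$ in a basis $X_1,\dots,X_r$ of $V^X$ with coefficients depending measurably (in fact smoothly, by evaluating at suitable points) on $t$ gives $X=\sum_\alpha b_\alpha(t)X_\alpha$.

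The main obstacle I anticipate is the direct implication: turning the mere existence of $\Psi$ into the finite-dimensionality of $V^X$ requires carefully choosing $m$ and showing that the prolonged vector fields at a generic point span a space of dimension independent of the point, and that closedness under bracket of the annihilator space is not lost. The cleanest route is to invoke the correspondence between superposition rules and flat connections / foliations on $N^{m+1}$ transversal to the projection onto the last factor, which is the form in which this is done in \cite{CGM07,Dissertationes}; I would cite those references for the technical lemmas on diagonal prolongations rather than reprove them, and concentrate the exposition on the reduction steps above, since the full Lie–Scheffers theorem is classical and here serves only as a template for the foliated version proved later.
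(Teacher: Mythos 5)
The paper itself does not prove this statement: it quotes the Lie--Scheffers theorem as a classical result with references, and the machinery you describe (diagonal prolongations, common first integrals, the implicit function theorem) is precisely what the paper later adapts to prove its own foliated Lie--Scheffers theorem (Theorem \ref{FLST}) via Lemma \ref{Lem:LS}. So your route is the standard geometric proof of \cite{CGM07,Dissertationes} and is consistent with the template the paper relies on; there is no competing proof in the paper to compare against.

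Two concrete points need repair. First, in the converse direction the number of functionally independent common first integrals of the prolonged distribution on $N^{m+1}$ is $(m+1)\dim N-\dim V$, not $r\cdot\dim N-\dim V$; the relevant condition is $m\dim N\ge\dim V$, which guarantees one can choose $n=\dim N$ integrals $F_1,\dots,F_n$ with $\partial(F_1,\dots,F_n)/\partial(x_{(0)}^1,\dots,x_{(0)}^n)\neq 0$ so that the implicit function theorem yields $\Psi$. Second, and more seriously, in the direct implication the claim that the space of vector fields $Y$ with $Y^{[m+1]}$ annihilating the first integrals is finite-dimensional ``since $N^{m+1}$ is finite dimensional'' is not a valid argument: the $C^\infty(N^{m+1})$-module of vector fields tangent to the codimension-$n$ foliation cut out by $\Psi$ is infinite-dimensional. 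Finite-dimensionality follows instead from (i) the fact that at a generic point at most $m\dim N$ diagonal prolongations tangent to that foliation can be pointwise linearly independent, because their projections to $N^m$ must remain independent, and (ii) Lemma \ref{Lem:LS}, which forces the coefficients of any $C^\infty$-linear combination of independent prolongations that is again a prolongation to be constants. Closing $\{X_t^{[m+1]}\}_{t\in\mathbb{R}}$ under Lie brackets (brackets of prolongations are prolongations) and applying (i)--(ii) then gives the finite-dimensional Vessiot--Guldberg Lie algebra and the decomposition $X=\sum_{\alpha}b_\alpha(t)X_\alpha$. With these two corrections your sketch coincides with the proof in the cited references.
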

	
	One of the simplest non-trivial nonlinear examples of Lie systems is given by the Riccati equation \cite{LS}. Every Riccati equation is related to a $t$-dependent
	vector field on $\mathbb{R}$ of the form 
	$$
	X^{Ric}(t,x)=(a_1(t)+a_2(t)x+a_3(t)x^2)\frac{\partial}{\partial x},
	$$
	for certain $t$-dependent functions $a_1(t),a_2(t),$ and $a_3(t)$. We recall that it is sometimes assumed that $a_1(t)a_3(t)$ is not identically equal to zero. Then, $
	X^{Ric}=\displaystyle{\sum_{\alpha=1}^3} a_\alpha (t)X_\alpha,
	$
	where 
	$$
	X_1=\frac{\partial}{\partial x},\quad X_2=x\frac{\partial}{\partial x},\quad X_3=x^2\frac{\partial}{\partial x}
	$$ 
	are vector fields on $\mathbb{R}$ satisfying the commutation relations
	$$
	[X_1, X_2] = X_1,\qquad [X_1,X_3]=2X_2,\qquad [X_2,X_3]=X_3,
	$$
	and they therefore span a Lie algebra of vector fields isomorphic to $\mathfrak{sl}_2$ \cite{LS,PW}. According  to the Lie--Scheffers theorem, Riccati equations must admit a superposition rule. Indeed, it is known \cite{CGM00,In44,LS} that the general solution, $x(t)$, to a Riccati equation can be written in terms of a function $\Psi:\mathbb{R}^3\times \mathbb{R}\rightarrow \mathbb{R}$ in the form
	$$
	x(t)=\Psi(x_{(1)}(t),x_{(2)}(t),x_{(3)}(t),k),\qquad k\in \mathbb{R},
	$$
	where $x_{(1)}(t),x_{(2)}(t),x_{(3)}(t)$ are three different particular solutions to $X$ and 
	\begin{equation}\label{Eq:PSR}
	\Psi(u_1,u_2,u_3;k)=\frac{u_1(u_3-u_2)-ku_2(u_3-u_1)}{(u_3-u_2)-k(u_3-u_1)} ,
	\end{equation}
	and the limit $k\to \infty$ should be admitted to retrieve the particular solution $x_{(2)}(t)$. This latter remark about recovering $x_{(2)}(t)$ and the fact that (\ref{Eq:PSR}) is only well defined in an open dense subset of $\mathbb{R}^3\times\mathbb{R}$ explain why $\Psi$ is called, more properly, a {\it local superposition rule} \cite{Dissertationes}. We will not study this aspect in detail here as it is not relevant to our purposes and it is not important for practical applications (see \cite{Dissertationes,LS20} for details).
	
	Another relevant example of Lie system (see \cite{CGM00,Dissertationes,LS20}) is given by the system of first-order differential equations on an $r$-dimensional Lie group $G$ of the form
	\begin{equation}\label{Aut}
		X^G(t,g)=\sum_{\alpha=1}^rb_\alpha(t)X^R_\alpha(g),\qquad \forall g\in G,\quad\forall t\in \mathbb{R},
	\end{equation}
	where $X^R_1,\ldots,X_r^R$ stand for a basis of right-invariant vector fields
	on $G$ and $b_1(t),\ldots,b_r(t)$ are arbitrary $t$-dependent functions. Indeed,  if $R_g:h\in G\mapsto hg\in G$ is the right-translation map  and $\{e_1,\ldots,e_r\}$ 
	is a basis of $T_eG$, then the right-invariant vector fields $X^R_1,\ldots,X^R_r$, defined by  $X_\alpha ^R(g)=R_{g*e}e_\alpha $, span an $r$-dimensional Lie algebra of vector fields on $G$. Consequently,  the $t$-dependent vector field (\ref{Aut}) defines  a Lie system. The  Lie--Scheffers theorem states that  the differential equation  determining the integral curves of a $t$-dependent vector field $X^G$ on $G$ that takes the form  
	\begin{equation}\label{Aut2}\frac{dg }{dt}=X^G(t,g)\end{equation}  admits a superposition rule.
	A simple application of the right-translation $R_{g^{-1}*g}$ to both sides of (\ref{Aut2}) leads  to an  equivalent equation for the solutions $g(t)$, i.e.
	\begin{equation}\label{Aut3}R_{g^{-1}*g}\frac{dg}{dt} =  \sum_{\alpha=1}^rb_\alpha(t)e_\alpha\in T_eG.
	\end{equation}
	The right-invariance of the $t$-dependent vector field (\ref{Aut}) relative to the right action of $G$ on  itself, namely $R_{h*g}X^G(t,g)=X^G(t,hg)$ for every $g,h\in G$ and $t\in \mathbb{R}$, shows the right-invariance of equation (\ref{Aut2}), or its equivalent (\ref{Aut3}), i.e.  any particular solution $g_p(t)$ to (\ref{Aut2}) gives rise to a new particular solution $R_h g_p(t)$ of (\ref{Aut2})  for every $h\in G$.  As the initial conditions at $t=0$ determine univocally particular solutions of  (\ref{Aut2}), the general solution to  (\ref{Aut2}), let us say $g(t)$, can be
	brought into the form
	$$
	g(t)=R_hg_p(t),
	$$
	where $g_p(t)$ is any particular solution to (\ref{Aut2}) and $h\in G$. 
	Then, $X^G$ admits a superposition rule  involving one particular solution given by  $\Psi:(g,h)\in G\times G\mapsto R_hg\in G$.
	
	Lie systems of the form (\ref{Aut}) are called {\it automorphic Lie
		systems} \cite{Dissertationes}. Their special role in the theory of Lie systems is explained
	by the following theorem, which states that the general solution to every Lie system
	can be obtained from the knowledge of any particular solution of a related automorphic Lie system \cite{CGM00,Dissertationes,LS20,Ve93,Ve99}.
	
	\begin{theorem}
		Let $X$ be a Lie system on  $N$ of the form $X={\displaystyle\sum_{\alpha=1}^r}b_\alpha(t)X_\alpha$ for certain $t$-dependent functions $b_1(t),\ldots,b_r(t)$ and an $r$-dimensional Vessiot--Guldberg Lie algebra $V=\langle X_1,\ldots,X_r\rangle$. Let
		$G$ be the unique connected and simply connected Lie group whose Lie algebra is 
		isomorphic to $V$. Let $\varphi:G\times N\rightarrow N$ be the local Lie group action
		whose fundamental vector fields\footnote{We hereafter define the fundamental vector fields of a Lie group action $\varphi:G\times N\rightarrow N$ by $X_v(x)=\frac{d}{dt}\big|_{t=0}\varphi(\exp(-tv),x)$ for every $v\in T_eG$ and $x\in N$.} are spanned by $X_1,\ldots,X_r$. Then, the general  form of the integral curves,   $x(t)$, of $X$ can be written as $x(t)=\varphi(g(t),x_0)$, where $x_0\in N$ and $g(t)$ is any particular solution to (\ref{Aut2}) associated with the automorphic Lie system on $G$ of the form
		$
		X^G(t,g)=-{\displaystyle \sum_{\alpha=1}^r}b_\alpha (t)X_\alpha^R(g)$ for every $g\in G$ and $t\in \mathbb{R}$.

	\end{theorem}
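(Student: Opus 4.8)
The plan is to establish the correspondence between particular solutions of $X$ on $N$ and particular solutions of the automorphic Lie system $X^G$ on $G$ via the action $\varphi$. First I would observe that, since $X_1,\ldots,X_r$ are the fundamental vector fields of $\varphi$ associated to a basis $v_1,\ldots,v_r$ of $\mathfrak{g}$, the map $\varphi$ intertwines the right-invariant vector fields on $G$ with the vector fields on $N$ in the sense that $\varphi_{*(g,x)}(X^R_\alpha(g),0)=X_\alpha(\varphi(g,x))$ for every $g\in G$ and $x\in N$; this follows from differentiating $\varphi(\exp(-sv_\alpha)g,x)$ at $s=0$ and using that $\varphi$ is an action. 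Using that $\mathfrak{g}\simeq V$, the commutation relations among the $X^R_\alpha$ match those among the $X_\alpha$ up to sign, which is precisely what makes $X^G=-\sum_\alpha b_\alpha(t)X^R_\alpha$ the ``right'' automorphic partner, rather than $+\sum_\alpha b_\alpha(t)X^R_\alpha$.

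Next I would take $g(t)$ to be any particular solution of $X^G$, fix $x_0\in N$, and set $x(t):=\varphi(g(t),x_0)$. Differentiating with respect to $t$ and applying the chain rule together with the intertwining relation gives
\begin{equation}
\frac{dx}{dt}(t)=\varphi_{*}\Big(\frac{dg}{dt}(t),0\Big)=\varphi_{*}\Big(-\sum_{\alpha=1}^rb_\alpha(t)X^R_\alpha(g(t)),0\Big)=-\sum_{\alpha=1}^rb_\alpha(t)X_\alpha(x(t)).
\end{equation}
Here the sign from $X^G$ must cancel against a sign coming from the convention $X_v(x)=\frac{d}{dt}|_{t=0}\varphi(\exp(-tv),x)$; carrying the signs carefully through the computation of $\varphi_*$ acting on a right-invariant vector field is the one place where care is genuinely needed, and I expect this bookkeeping to be the main (though modest) obstacle. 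The upshot is that $x(t)$ solves $\dot x=\sum_\alpha b_\alpha(t)X_\alpha(x)=X(t,x)$, so every curve of the form $\varphi(g(t),x_0)$ is a particular solution of $X$.

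Finally I would argue that these curves exhaust the general solution. Given any initial condition $x_1\in N$, I pick $g(t)$ to be the particular solution of $X^G$ with $g(t_0)=e$ (existence and uniqueness for the ODE on $G$); then $\varphi(g(t),x_1)$ is a solution of $X$ with value $x_1$ at $t=t_0$, and by uniqueness of solutions to $X$ it is the solution through $x_1$. Hence varying $x_0\in N$ produces every solution, so $x(t)=\varphi(g(t),x_0)$ is indeed the general solution, with $x_0$ playing the role of the initial-condition parameter. (If one prefers, one may instead fix a single particular solution $g(t)$ of $X^G$ and vary $x_0$, absorbing the dependence on the chosen $g(t)$ into a relabelling of $x_0$ by the diffeomorphism $\varphi(g(t_0),\cdot)$; this recovers exactly the statement as phrased.) Throughout, ``local'' caveats about the action $\varphi$ and the domain of definition of solutions are handled as in the standard Lie systems case, since the argument is purely infinitesimal plus ODE uniqueness.
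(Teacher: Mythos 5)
Your argument is essentially the paper's. The paper states this theorem without proof (citing the literature), but its proof of the generalisation to foliated Lie systems, Theorem \ref{FLSAFLS}, is exactly your computation: differentiate $x(t)=\varphi(g(t),x_0)$, use that $\frac{dg}{dt}=-\sum_{\alpha}b_\alpha(t)X^R_\alpha(g(t))$, and convert right-invariant vector fields into fundamental vector fields; the paper merely routes the computation through the identity by writing $\varphi(g(t)g^{-1}(t_0),\varphi(g(t_0),x_0))$ and applying $R_{g^{-1}(t_0)*}$, whereas you apply the intertwining relation directly at $g(t)$. The one thing to fix is the sign you yourself flag but leave unresolved: with the footnote's convention $X_v(x)=\frac{d}{dt}\big|_{t=0}\varphi(\exp(-tv),x)$, the correct intertwining relation is $\varphi_{*(g,x)}(X^R_\alpha(g),0)=-X_\alpha(\varphi(g,x))$, since $\frac{d}{ds}\big|_{s=0}\exp(-sv_\alpha)g=-X^R_\alpha(g)$; it is not $+X_\alpha(\varphi(g,x))$ as you first assert, and as written your stated relation and the last equality of your display are mutually inconsistent. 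With the corrected sign your chain ends in $+\sum_{\alpha}b_\alpha(t)X_\alpha(x(t))=X(t,x(t))$ directly and no further cancellation is needed. Your final paragraph, exhausting the general solution via existence and uniqueness for the ODE on $G$ and on $N$, is fine and is in fact more explicit than the paper, whose proof of Theorem \ref{FLSAFLS} closes with ``the theorem follows''.
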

	\section{On the definition of stratified Lie systems}\label{Se:SLS}
	
	Let us introduce our stratified Lie system notion and
	illustrate its usefulness with several examples of physical and mathematical interest. Our terminology slightly differs from the one in the previous literature, where stratified Lie systems are known as foliated Lie systems \cite{CGM00,LZ21} because of the reasons already given in the previous sections. 
	
	\begin{definition} We call a {\it stratified Lie system} on a manifold $N$ a $t$-dependent
		vector field  on $N$ of the form
		\begin{equation}\label{DFLS}
			X(t,x)=\sum_{\alpha=1}^rg_\alpha(t,x) X_\alpha(x),\qquad \forall t\in \mathbb{R},\quad \forall x\in N,
		\end{equation}
		where $X_1,\ldots,X_r$ span an $r$-dimensional  real  Lie algebra $V$ of vector fields  and $g_1,\ldots,g_r$ are common $t$-dependent constants of motion of the elements of $V$, i.e. $X_\alpha g_\beta =0$ on $N$ for every $\alpha,\beta=1,\ldots,r$. We call (\ref{DFLS}) and $V$ a {\it decomposition} and a {\it Vessiot--Guldberg Lie algebra} of the stratified Lie system $X$, respectively. If $X$ admits a decomposition (\ref{DFLS}) for a Vessiot-Guldberg Lie algebra $V$ so that the generalised distribution $\mathcal{D}^V$ is regular, we say that $X$ is a {\it foliated Lie system}.
	\end{definition}
	
	In virtue of the results by Stefan and Sussmann \cite{St80,La18,Su73}, the generalised distribution $\mathcal{D}^V$ associated with a Vessiot--Guldberg Lie algebra $V$ of a stratified Lie system is integrable and gives rise to a
	stratification $\mathcal{F}$ of $N$ such that the tangent  spaces to its strata  coincide  with $\mathcal{D}^V$. We call {\it $\mathcal{F}$-stratified Lie system} a stratified Lie system with a 
	Vessiot--Guldberg Lie algebra $V$ such that $\mathcal{D}^V$ consists of  the tangent spaces to the strata of $\mathcal{F}$. As the vector fields of $V$ are tangent to the
strata of the stratification $\mathcal{F}$, the system $X$ can be restricted to the strata of $\mathcal{F}$. Since $X_\beta g_\alpha =0$ for every $\alpha,\beta=1,\ldots,r$, the 
	restrictions of $g_1,\ldots,g_r$ to a stratum $\mathcal{F}_\lambda$ of $\mathcal{F}$ give rise to $r$ functions depending only on $t$. Indeed, consider a smooth curve\footnote{Smooth relative to the manifold structure with boundary on $[0,1]$ (see \cite{AM87} for details).} $\gamma:u\in [0,1]\subset\mathbb{R}\mapsto \gamma(u)\in\mathcal{F}_\lambda$ connecting two points of a stratum $\mathcal{F}_\lambda$. Then, the tangent vector at $\gamma(u)$ to the curve $\gamma$, let us say $\dot \gamma(u)$, can be written as a linear combination $\dot \gamma(u)=\displaystyle\sum_{\alpha=1}^rf_\alpha(u)X_{\alpha}(\gamma(u))$ of the values of the tangent vectors $X_1(\gamma(u)),\ldots,X_r(\gamma(u))$ spanning   $T_{\gamma(u)}\mathcal{F}_\lambda$ for certain $u$-dependent functions $f_1,\ldots,f_r:[0,1]\rightarrow \mathbb{R}$. Moreover,
	$$
	g(t,\gamma(1))-g(t,\gamma(0))=\int_0^1\frac{\partial }{\partial u }[g(t,\gamma(u))]du=\int_0^1\displaystyle\sum_{\alpha=1}^rf_\alpha(u) (X_\alpha g_t)(\gamma(u)) du=0.
	$$
	Consequently, $g(t,x)=g(t,x')$ for arbitrary points $x,x'\in \mathcal{F}_\lambda$ and any $t\in \mathbb{R}$. Therefore, the restriction of (\ref{DFLS}) to a stratum $\mathcal{F}_\lambda$ becomes a Lie system. More specifically, an $\mathcal{F}$-stratified Lie system gives rise to a Lie system on each stratum of $\mathcal{F}$ with a Vessiot--Guldberg Lie algebra  given by the restriction to the stratum of the vector fields of the Vessiot--Guldberg Lie algebra of the stratified Lie system. As a consequence of previous comments, the dimensions of all the induced  Vessiot--Guldberg Lie algebras on the strata are bounded. This last result can be used to show that defining a Lie system on each stratum of a stratification of a manifold does not necessarily give rise to a stratified Lie system. 
	
	In fact, for instance, the $t$-dependent vector field on $\mathbb{R}^3$ of the form 
	$$
	X(t,x,y,z)=\frac{\partial}{\partial x}+\sum_{n=0}^{\infty}e^{t(n+1)}f_n(z)x^n\frac{\partial}{\partial y},
	$$
	where it is assumed that $f_0(z)$ does never vanish and each $f_n(z)$, for $n\in \mathbb{N}$, is a smooth function that vanishes for every $z\in [-n,n]$ and it is different  from zero off this interval. Then, $X$ gives rise to a regular integrable generalised distribution $\mathcal{D}^{V^X}$ spanned by the vector fields
	$$
	\mathcal{D}^{V^X}_{(x,y,z)}=\left\langle \frac{\partial}{\partial x},\frac{\partial}{\partial y}\right\rangle, 
	$$
	with leaves given by the foliation $\mathcal{G}$ by planes in $\mathbb{R}^3$ of the form
	$$
	\mathcal{G}_z=\{(x,y,z)\in \mathbb{R}^3:x,y\in \mathbb{R}\},\qquad \forall z\in \mathbb{R}.
	$$
	The system $X$, at points with $k\leq |z_0|\leq k+1$, where $k\in \{0,1,2,3,\ldots\}$, takes the form
	$$
	X(t,x,y,z_0)=\frac{\partial}{\partial x}+\sum_{n=0}^{k}e^{t(n+1)}f_n( {z_0})x^n\frac{\partial}{\partial y},
	$$
	and, since the $t$-dependent vector field $X$ on $\mathbb{R}^3$ is tangent for every fixed $t$ to the submanifolds in $\mathbb{R}^3$ of the form $z=k$ for any $k\in \mathbb{R}$, the restriction of  $X$ to $\mathcal{G}_{z_0}$, let us say $X_{z_0}$, exists and it has a smallest Lie algebra $V^{z_0}_{k+1}=\{\partial/\partial x,\partial/\partial y,\ldots, x^k\partial/\partial y\}$. Then, all remaining Vessiot--Guldberg Lie algebras of the restriction of $X$  to $\mathcal{G}_{z_0}$ must contain $V^{z_0}_{k+1}$, which has dimension $k+1$. It turns out that the restrictions of $X$ to the leaves of $\mathcal{G}$ admit smallest Lie algebras whose dimensions cannot be bounded. Therefore, $X$ is not a stratified Lie system.
	
	Assume for a while that  $\mathcal{D}^V$ is not regular. As each finite-dimensional Lie algebra of vector fields on $N$ spans a generalised distribution whose rank is a lower semi-continuous function, then its rank is locally constant on an open dense subset of $N$ (see \cite{Va94} for details). Hence, every stratified Lie system is a foliated Lie system around a generic point.
	
	\section{Examples of foliated Lie systems}
	Let us provide several examples of foliated Lie systems
	with relevant physical applications. 
	
	As a first instance, given a Lie system defined by a $t$-dependent vector field on a manifold $N$ of the form
	\begin{equation}
		X(t,x)=\sum_{\alpha=1}^r b_\alpha(t)X_\alpha(x),\qquad \forall x\in N, \qquad \forall t\in \mathbb{R},
		\label{Liesyst}
	\end{equation}
	for arbitrary $t$-dependent functions $b_1(t),\ldots,b_r(t)$,
	so that there exist $r^3$ real numbers $c_{\alpha\beta}\,^\gamma$, with $\alpha,\beta,\gamma=1,\ldots,r$, such that $[X_\alpha, X_\beta]={\displaystyle\sum_{\gamma=1}^r}c_{\alpha\beta}\,^\gamma X_\gamma$. 
	If  $f\in C^\infty(N)$, then $ f\, X$ is not, in general,  a Lie system any more because the vector fields 
	$\bar X_\alpha=f\,X_\alpha$, with $\alpha=1,\ldots,r$, do not need to close on a finite-dimensional Lie algebra. However, if 
	$f \in C^\infty(N)$ is such that $X_\alpha f=0$ for $  \alpha=1,\ldots, r$, then the new
	$t$-dependent  vector field $f\, X$ is a  stratified Lie system. 
	
	Consider a $t$-dependent completely integrable Hamiltonian system $(h,\omega,T^*\mathbb{R}^n)$, where $h\in C^\infty(\mathbb{R}\times T^*\mathbb{R}^n)$ and $T^*\mathbb{R}^n$ is equipped with its canonical symplectic form $\omega_0$. Assume that there exists a $t$-dependent canonical transformation   $h(t,q_1,\ldots,q_n,p_1,\ldots,p_n)$, where $\{q_1,\ldots,q_n,p_1,\ldots,p_n\}$ stand for  some cotangent bundle coordinates for  $T^*\mathbb{R}^n$, onto a new $t$-dependent Hamiltonian $H(t,P_1(t),\ldots,P_n(t))$ that depends only on the momentum coordinates of a new  Darboux  coordinate system, $\{Q_1(t),\ldots,Q_n(t),P_1(t),\ldots,P_n(t)\}$, on $T^*\mathbb{R}^n$ for every $t\in \mathbb{R}$ (see \cite{AM87,Go80}).  In this case, the Hamilton equations for $H(t,P)$ read%\marginpar{sistemas completamente integrables} 
	\begin{equation}\label{FLS1}
		\frac{dQ_i}{dt}=\frac{\partial H}{\partial P_i}(t,P),\qquad \frac{dP_i}{dt}=0,\qquad i=1,\ldots,n.
	\end{equation}
	This system is associated with the
	$t$-dependent vector field on $T^*\mathbb{R}^n$ given by
\begin{equation}\label{FLS1b}
	X^{HJ}(t,Q,P)=\displaystyle\sum_{i=1}^n\frac{\partial H}{\partial P_i}(t,P)\displaystyle\frac{\partial}{\partial Q_i}.
	\end{equation}
	The vector fields  $\{X^{HJ}_t\}_{t\in \mathbb{R}}$ span an Abelian Lie algebra $V^{X^{HJ}}$ of vector fields. If $
	V^{X^{HJ}}$ is finite-dimensional, then $X^{HJ}$ is a Lie system. Nevertheless, this does not need to be the case. For instance, if $H(t,P)={\displaystyle \sum_{i=1}^n}\cos(tP_i)$, then
	\begin{equation}\label{FLS}
		X^{HJ}(t,Q,P)=-\displaystyle\sum_{i=1}^n\sin(tP_i)t\frac{\partial}{\partial Q_i}
	\end{equation}
	and $V^{X^{HJ}}$ is the infinite-dimensional Abelian Lie algebra  spanned by the  vector fields 
	$
	Y_\lambda={\displaystyle\sum_{i=1}^n}\sin(\lambda P_i)\displaystyle\frac{\partial}{\partial Q_i},$ with $\lambda\in \mathbb{R}_+=\{x\in \mathbb{R}\mid x>0\}.$ Therefore, in this particular case, $X^{HJ}$ is not a Lie system.
	
	Independently of the specific form of $H(t,P)$, the manifold $T^*\mathbb{R}^n$ always admits a foliation $\mathcal{F}^{HJ}$ by leaves
	\begin{equation}\label{leavesTsQ}
		\mathcal{F}^{HJ}_k =\{(Q,P)\in T^*\mathbb{R}^n\mid P_1=k_1,\ldots,P_n=k_n\},
	\end{equation}
	parametrised via an $n$-tuple $k=(k_1,\ldots,k_n)\in \mathbb{R}^n$. 
	System (\ref{FLS1}) reduces on each $\mathcal{F}^{HJ}_k$ to 
	\begin{equation}\label{FLSk}
		\frac{dQ_i}{dt}=\frac{\partial H}{\partial P_i}(t,k),\qquad i=1,\ldots,n,
	\end{equation}
	which describes the integral curves of the restriction of (\ref{FLS1b}) to each $\mathcal{F}^{HJ}_k$, namely
	$$
	X^{HJ}_k=\sum_{i=1}^n\frac{\partial H}{\partial P_i}(t,k)\frac{\partial }{\partial Q_i}.
	$$
	The vector fields  $\{(X^{HJ}_k)_t\}_{t\in \mathbb{R}}$ on $\mathcal{F}^{HJ}_k$, for any fixed $k\in \mathbb{R}^n$, span a Lie algebra of vector fields,  $V_k$, contained in the finite-dimensional Lie algebra spanned by the restrictions to $\mathcal{F}_k^{HJ}$ of the vector fields on $T^*\mathbb{R}^n$ given by 
	%$T^*\mathbb{R}^3$ given by
	%$$
	%V_k=\left\langle-\frac{\partial H}{\partial P_i}(t,k)\frac{\partial}{\partial Q_i}\right\rangle_{i=1,\ldots,n}\!\!\!\!\!\!\!\subset \left\langle \frac{\partial}{\partial Q_1},\ldots,\frac{\partial}{\partial Q_n}\right\rangle.
	%$$
	\begin{equation}\label{VI}
		V^{HJ}=\left\langle \frac{\partial}{\partial Q_1},\ldots,\frac{\partial}{\partial Q_n} \right\rangle.
	\end{equation}
	Hence, $V_k$ is finite-dimensional and $X^{HJ}_k$ is a Lie system for every  $k\in \mathbb{R}^n$. Moreover, the vector fields in $V^{HJ}$ span  an $n$-dimensional  integrable regular distribution on $T^*\mathbb{R}^n$, whose leaves are given by (\ref{leavesTsQ}).
	Therefore, (\ref{FLS1}) becomes a foliated Lie system and $V^{HJ}$ is an associated Vessiot--Guldberg Lie algebra. 
	
	Every $X^{HJ}_k$ is related to a Lie system whose smallest Lie algebra is contained in the Lie algebra spanned by the restriction to $\mathcal{F}^{HJ}_k$ of the elements of $V^{HJ}$. Consequently, although the Lie systems $X^{HJ}_k$ may be different for distinct values of $k$, they  all are restrictions of a Lie system on $T^*\mathbb{R}^n$ with a Vessiot--Guldberg Lie algebra $V^{HJ}$. This relation will be studied more carefully in forthcoming sections.
	
	To illustrate more in detail the properties and applications of foliated and stratified Lie systems,
	let us consider another example: a $t$-dependent generalisation of a Lax pair studied in \cite{BV90} to analyse integrable systems. Let $\mathfrak{g}^{lp}=\mathbb{R}^n\rtimes \mathbb{R}^n$ 
 be the semi-direct sum of Lie algebras admitting a basis $\{e_i,h_i\}_{i=1,\ldots,n}$ such that $\mathbb{R}^n\simeq \langle e_1,\ldots, 
	e_n\rangle$, $\mathbb{R}^n\simeq \langle h_1,\ldots,h_n\rangle$ are Abelian Lie algebras and 
	\begin{equation}\label{AlgRel}
		[h_i,e_j]=2\delta^i_{j}e_j,\qquad i,j=1,\ldots,n,
	\end{equation}
	where $\delta^i_j$ is the Kronecker delta.
	Let $\{v^1,\ldots,v^{2n}\}$ be the basis of $\mathfrak{g}^{lp*}$   dual to the basis $\{v_1=e_1,\ldots,v_n=e_n,v_{n+1}=h_1,\ldots,v_{2n}=h_n\}$ of $\mathfrak{g}^{lp}$. %Let us write $v^1=\theta_1,\ldots,v^n=\theta_n,v_{n+1}=I^1,\ldots,v_{2n}=I^{2n}$.
	Hence, $\{v^1,\ldots,v^{2n}\}$ becomes a global coordinate system on $\mathfrak{g}^{lp}$. Define a family of $t$-dependent functions  $f_\alpha:  \mathbb{R}\times \mathfrak{g}^{lp}\mapsto \mathbb{R}$  of the form  $f_\alpha=f_\alpha(t,v^{n+1},\ldots,v^{2n})$ for $\alpha=1,\ldots,n$. i.e. the function really depends on the last $n$ last variables. % and $f_\alpha(t,v)=0$ for $\alpha=n+1,\ldots,2n$\marginpar{no necesarios}. 
	We set
	\begin{equation}\label{eq}
		\frac{dv}{dt}=-\sum_{\alpha=1}^{n}f_\alpha(t,v){\rm ad}_{v_\alpha}v=:X^{lp}(t,v),\qquad \forall v \in \mathfrak{g}^{lp},\quad \forall t\in \mathbb{R},
	\end{equation}
	where ${\rm ad}_{v_\alpha}v=[v_\alpha,v]$.
	
	If $\mathfrak{g}^{lp}$ is a matrix Lie algebra, the Lie bracket of $\mathfrak{g}^{lp}$ becomes the matrix commutator. Then,  (\ref{eq}) can be rewritten in the more common manner as a Lax pair
	\begin{equation}\label{FLS2}
		\frac{dv}{dt}=[v,m(t,v)],\qquad m(t,v)=\sum_{\alpha=1}^{n}f_\alpha(t,v) v_\alpha.
	\end{equation}
	If $\mathfrak{g}^{lp}$ is a general Lie algebra (not necessarily a matrix Lie algebra), one can alternatively extend by $C^\infty(\mathbb{R}\times \mathfrak{g}^{lp})$-linearity the Lie bracket in $\mathfrak{g}^{lp}$ to the space $C^\infty(\mathbb{R}\times \mathfrak{g}^{lp})\otimes \mathfrak{g}^{lp}$ of $\mathfrak{g}^{lp}$-valued $t$-dependent functions on $\mathfrak{g}^{lp}$. This allows us to use the expression (\ref{FLS2}) to describe every system (\ref{eq}).
	
	Consider the unique connected and simply connected Lie group $G^{LP}$ with Lie algebra $\mathfrak{g}^{lp}$. Then, $G^{LP}$ acts on $\mathfrak{g}^{lp}$ via the adjoint action ${\rm Ad}:(g,v)\in G^{LP}\times \mathfrak{g}^{lp}\mapsto {\rm Ad}_gv\in \mathfrak{g}^{lp}$. The fundamental vector fields of the adjoint action read
	\begin{equation}\label{FVFg}
		X^{\rm ad}_w(v)=\frac{d}{dt}\bigg|_{t=0}{\rm Ad}_{\exp(-tw)}v=-{\rm ad}_wv,
		\qquad \forall v,w\in \mathfrak{g}^{lp}.
	\end{equation}
	This enables us to bring (\ref{eq}) into the form 
	\begin{equation}\label{SF}
		\frac{dv}{dt}=\sum_{\alpha=1}^{n}f_\alpha (t,v)X^{\rm ad}_{v_\alpha}(v).
	\end{equation}
	In our chosen coordinate system and in view of (\ref{AlgRel}), the fundamental vector fields of the adjoint action for the Lie algebra $\mathfrak{g}^{lp}$ take the form 
	$$
	X^{\rm ad}_{e_\alpha}(v)=2v^{\alpha+n}\frac{\partial}{\partial v^{\alpha}},\qquad 
	X^{\rm ad}_{h_\alpha}(v)=-2v^{\alpha}\frac{\partial}{\partial v^{\alpha}},\qquad \alpha=1,\ldots,n.
	$$
	Hence, $X^{\rm ad}_{v_\alpha} f_\beta=0,$ for $\alpha=1,\ldots,2n$, $\beta=1,\ldots,n$. In particular, (\ref{eq}) takes the form
	$$
	\frac{dv^\alpha}{dt}=2f_\alpha(t,v)v^{\alpha+n},\qquad \frac{dv^{\alpha+n}}{dt}=0, \qquad \alpha=1,\ldots,n.
	$$
	Consider the Lie algebra $V^{\mathfrak{g}^{lp}}=\langle X^{\mathfrak{g}^{lp}}_1=2\partial/\partial v^{1},\ldots,X^{\mathfrak{g}^{lp}}_n=2\partial/\partial v^{n}\rangle$. The previous system can be rewritten as
	\begin{equation}\label{FLS3}
		\frac{dv}{dt}=\sum_{\alpha=1}^ng_\alpha(t,v)X^{\mathfrak{g}^{lp}}_\alpha(v),\qquad g_\alpha(t,v)=f_\alpha(t,v)v^{\alpha+n},\qquad \alpha=1,\ldots,n.
	\end{equation}
	The elements of $V^{\mathfrak{g}^{lp}}$ span a distribution $\mathcal{D}^{V^{\mathfrak{g}^{lp}}}$ of rank $n$ on $\mathfrak{g}^{lp}$. The leaves of the foliation, $\mathcal{F}^{\mathfrak{g}^{lp}}$, associated with $\mathcal{D}^{V^{\mathfrak{g}^{lp}}}$ on $\mathfrak{g}^{lp}$ take the form
	\begin{equation}\label{Leafg}
		\mathcal{F}^{\mathfrak{g}^{lp}}_k=\{ (v^1,\ldots,v^{2n})\in \mathfrak{g}^{lp}\mid v^{n+1}=k_{1},\ldots v^{2n}=k_{n}\},\qquad \forall k=(k_{1},\ldots,k_{n})\in \mathbb{R}^n.
	\end{equation}
	Moreover, the functions $g_
	\alpha$, with $\alpha=1,\ldots,n$, are $t$-dependent constants of motion of the vector fields belonging to $V^{\mathfrak{g}^{lp}}$. Therefore, (\ref{FLS3}) is a foliated Lie system associated with a Vessiot--Guldberg Lie algebra $V^{\mathfrak{g}^{lp}}$. In particular, the integration of the distribution $\mathcal{D}^{V^{\mathfrak{g}^{lp}}}$ gives rise to the foliation $\mathcal{F}^{\mathfrak{g}^{lp}}\!\!$. We can say then that $X$ is an $\mathcal{F}^{\mathfrak{g}^{lp}}\!\!$-foliated Lie system.
	
	The $t$-independent Lax pair studied in \cite{BV90} can be considered as a $t$-independent foliated Lie system of the form (\ref{FLS3}) with $f_\alpha=\partial h/\partial v^{\alpha+n}$, with $h=h(v^{n+1},\ldots,v^{2n})$ and $\alpha=1,\ldots,n$.
	
	System (\ref{FLS3}) does not need to be a Lie system. For example, one can consider the case when the $t$-dependent functions $g_\alpha$ on $\mathfrak{g}^{gl}$ take the form 
	$$
	g_\alpha(t,v)= \sin(t v^{\alpha+n})v^{\alpha+n},\qquad \alpha=1,\ldots,n.
	$$
Then,
	$$
	X^{lp}(t,v)= 2\sum_{\alpha=1}^n\sin(t v^{\alpha+n})v^{\alpha+n}\frac{\partial}{\partial v^\alpha}
	$$
	and $V^{X^{lp}}$ is infinite-dimensional.
	
	\section{On foliated superposition rules}\label{Sec:FLShat}
	Let us now study how one can obtain all the solutions to an $\mathcal{F}$-foliated Lie system passing through a leaf of $\mathcal{F}$ associated with the foliated Lie system in terms of other particular solutions passing through the same leaf. This will lead to introduce the notion of a foliated superposition rule.
	
	Consider again the foliated Lie system (\ref{FLS1}) on $T^*\mathbb{R}^n$. This system was associated with a foliation $\mathcal{F}^{HJ}$ whose leaves $\mathcal{F}^{HJ}_k$, with $k\in \mathbb{R}^n$, were given in (\ref{leavesTsQ}). Particular solutions to (\ref{FLS1}) have the form
	$$
	(Q^{(1)}(t),P_1=P_1^0,\ldots,P_n=P_n^0)
	$$
	for a point $P^0=(P^0_1,\ldots,P^0_n)\in \mathbb{R}^n$ and a particular solution, 
	$Q^{(1)}(t)$, to
	$$
	\frac{dQ_i}{dt}=\frac{\partial H}{\partial P_i}(t,P^0),\qquad i=1,\ldots,n.
	$$
	Moreover,  $(Q^{(1)}(t)+\hat{Q}, P^0_1,\ldots,P^0_n)$, for any $\hat {Q}\in \mathbb{R}^n$, is  another particular solution of (\ref{FLS1}) within $\mathcal{F}^{HJ}_k$. In fact, every solution to (\ref{FLS1}) within  
	$\mathcal{F}^{HJ}_k$ can be written as
	\begin{equation}
 		(Q(t),P^0_1,\ldots,P^0_n)=(Q^{(1)}(t)+\hat{Q},P^0_1,\ldots,P^0_n),
	\end{equation}
	for a unique $\hat Q\in \mathbb{R}^n$ and every expression of this latter form is a solution. This allows for defining a map $\Psi^{HJ}:T^*\mathbb{R}^n\times T^*\mathbb{R}^n\rightarrow T^*\mathbb{R}^n$ given by
	$$
	\Psi^{HJ}(Q^{(1)},P^{(1)};\hat{Q},\hat{P})=(Q^{(1)}+\hat{Q},\hat {P}),
	$$
	which satisfies that every solution, $\xi(t)=(Q(t),P(t))$, to (\ref{FLS1}) with initial condition in a $\mathcal{F}^{HJ}_k$ can be brought into the form
	$$
	\xi(t)=\Psi^{HJ}(\xi^{(1)}(t),\lambda),
	$$
	in terms of a particular solution $\xi^{(1)}(t)$ of (\ref{FLS1}) with an initial condition in $\mathcal{F}^{HJ}_k$ and a parameter $\lambda\in \mathcal{F}^{HJ}_k$. Moreover, there exists a one-to-one relation between the initial conditions of the solutions $\xi(t)$ of (\ref{FLS1}) in $\mathcal{F}^{HJ}_k$ and the values of $\lambda\in \mathcal{F}^{HJ}_k$. Finally, one has that $\Psi^{HJ}$ is a standard superposition rule involving one particular solution for any  Lie system on $T^*\mathbb{R}^n$ of the form
	$$
	\frac{dQ_i}{dt}=b_i(t),\qquad \frac{dP_i}{dt}=0,\qquad i=1,\ldots,n,
	$$
	for arbitrary $t$-dependent functions $b_1(t),\ldots,b_n(t)$. 
	
	One can see that the foliated Lie system related to the Lax pair (\ref{FLS2}) shares similar properties. This motivates to introduce 
	the following definition.
	
	\begin{definition} \label{Def:FLS}
		An {\it $\mathcal{F}$-foliated superposition rule} depending on $m$-particular solutions for a system $X$ on a manifold $N$ relative to a foliation $\mathcal{F}$ on $N$  is a superposition rule $\Psi: N^m \times N \rightarrow N$ for a certain Lie system on $N$ such that
		\begin{enumerate} 
			\item $\Psi(\mathcal{F}_k^{m+1})\subset \mathcal{F}_k$ for every leaf $\mathcal{F}_k$ of $\mathcal{F}$, 
			\item Every particular  solution, $x(t)$, of $X$
			 containing a point of a leaf $\mathcal{F}_k$ of $\mathcal{F}$, namely there exists $t_0\in \mathbb{R}$ such that $x(t_0)\in \mathcal{F}_k$, takes the form
			$$
			x(t)=\Psi(x_{(1)}(t),\ldots,x_{(m)}(t),\lambda)
			$$
			in terms of a particular set $x_{(1)}(t),\ldots, x_{(m)}(t)$ of $m$ generic particular solutions of $X$ contained in $\mathcal{F}_k$ and a unique $\lambda\in \mathcal{F}_k$. 
		\end{enumerate}
	\end{definition}
	
	Let us recall that it stems from the previous definition that if  $X$ admits an $\mathcal{F}$-foliated superposition rule, then the particular solutions of $X$ are always contained in a leaf of $\mathcal{F}$. Note also that the particular solutions of $\mathcal{F}$ taking values in each leaf of $\mathcal{F}$ are always described  in terms of the same number $m$ of particular solutions within the same leaf. We will refer to an $\mathcal{F}$-foliated superposition rule simply as a foliated superposition rule if $\mathcal{F}$ is known from the  context or its specific form is irrelevant to our considerations.  Note that if $X$ is restricted to a leaf $\mathcal{F}_k$ of $\mathcal{F}$, then all the solutions of $X$ within $\mathcal{F}_k$ can be written as a function of a generic family of particular solutions of $X$ within $\mathcal{F}_k$. Moreover, $X$ can be restricted to each $\mathcal{F}_k$ as every $X_t$, for $t\in \mathbb{R}$, is tangent to the leaves of $\mathcal{F}$. Additionally, the foliated superposition rule becomes a standard superposition rule for the restriction of $X$ to $\mathcal{F}_k$, which is a Lie system.

	In view of Definition \ref{Def:FLS}, the foliated Lie system (\ref{FLS1}) admits an $\mathcal{F}^{HJ}$-foliated superposition rule $\Psi^{HJ}:T^*\mathbb{R}^n\times T^*\mathbb{R}^n\rightarrow T^*\mathbb{R}^n$, where the leaves of $\mathcal{F}^{HJ}$ take the form (\ref{leavesTsQ}).

	Let us stress an interesting fact on the foliated superposition rule for (\ref{FLS1}).  The foliated Lie system (\ref{FLS1}) admits a Vessiot--Guldberg Lie algebra $V^{HJ}=\langle X_1=\partial/\partial Q_1,\ldots,X_n=\partial/\partial Q_n\rangle$. Then, $V^{HJ}$ is a Vessiot--Guldberg Lie algebra of the Lie systems on $T^*\mathbb{R}^n$ of the form
	$$
	\frac{dQ_i}{dt}=d_i(t),\qquad \frac{dP_i}{dt}=0,\qquad i=1,\ldots,n,
	$$
	for arbitrary $t$-dependent functions $d_1(t),\ldots, d_n(t)$. These Lie systems admit a superposition rule (cf. \cite{CGM07}) given by
	$$
	\begin{array}{rccc}
		\widetilde{\Psi}:&T^*\mathbb{R}^n\times T^*\mathbb{R}^n&\longrightarrow& T^*\mathbb{R}^n\\
		&({Q}^{(1)},{P}^{(1)};\hat{Q},\hat{P}))&\mapsto&(Q^{(1)}+\hat Q,\hat{P}).
	\end{array}
	$$
	Then, $\Psi^{HJ}$ becomes an $\mathcal{F}^{HJ}$-foliated superposition rule for (\ref{FLS1}). In other words, the foliated superposition rule for the foliated Lie system (\ref{FLS1}) can be retrieved via a superposition rule for the Lie systems related to its Vessiot--Guldberg Lie algebra.  An explanation of this fact will naturally appear in the next section. 
	
	Note that it is only the restrictions of a foliated superposition rule $\Psi:N^m\times N\rightarrow N$ to $\bigcup_{k}\mathcal{F}_k^{m+1}$ what really matters to the study of solutions of foliated Lie systems. Despite this fact, foliated superposition rules take the form $\Psi:N^m\times N\rightarrow N$ in applications (as in previous examples) and this latter form can easily be obtained in next sections from the relation between foliated superposition rules and standard superposition rules.  
	
	\section{On a foliated Lie--Scheffers theorem}
	
	Let us now  study the properties of first-order systems of ODEs admitting a foliated superposition rule. Our results can be
	considered as a generalisation of the standard geometric Lie--Scheffers theorem \cite{CGM07,LS,PW}. As a byproduct, our characterisation gives us an algorithm to obtain foliated superposition rules and clarifies the relation between superposition rules of Lie systems with a Vessiot--Guldberg Lie algebra $V$ and the foliated superposition rules for foliated Lie systems with the same Vessiot--Guldberg Lie algebra.
	
	Let us first recall a standard definition and a lemma in the previous literature on Lie systems (see \cite{CGM07,Dissertationes,LS20}).
	
	\begin{definition}  If $X$ is a vector field on $N$, let us say $X={\displaystyle\sum_{i=1}^n}X^i(x)\partial/\partial x^i$, its {\it diagonal prolongation} to $N^m$ is the vector field  on $N^m$ given by
		$$
		X^{[m]}=\sum_{a=1}^m\sum_{i=1}^{n}X^i(x_{(a)})\frac{\partial}{\partial x_{(a)}^i},
		$$
		where $\{x_{(a)}^i\}_{a=1,\ldots,m,i=1,\ldots,n}$ is the coordinate system on $N^m$ obtained by defining the same coordinate system $x^1\ldots,x^n$ on each copy of $N$ within $N^m$.
	\end{definition}
	\begin{lemma}\label{Lem:LS} (see \cite{CGM07,Dissertationes})
		Let $X_1,\ldots, X_r$ be vector fields on $N$ whose diagonal
		prolongations to $N^m$ are linearly independent at a generic point.
		If $b_1,\ldots,b_r\in C^\infty(N^{m+1})$, then ${\displaystyle\sum_{\alpha=1}^r}b_\alpha X^{[m+1]}_\alpha=Y^{[m+1]}$  for a vector field $Y$ on $N$ if and only if $b_1,\ldots,b_r$ are constant.
	\end{lemma}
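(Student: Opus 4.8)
The plan is to prove the two implications separately; the forward (``if'') direction is immediate, and the converse (``only if'') direction carries all the content. For the ``if'' direction I would simply set $Y:=\sum_{\alpha=1}^r b_\alpha X_\alpha$, a vector field on $N$, and observe that, since the diagonal prolongation is $\R$-linear in its vector-field argument (read off the defining formula for $X^{[m]}$ in the Definition above), $Y^{[m+1]}=\sum_{\alpha=1}^r b_\alpha X_\alpha^{[m+1]}=X$; hence $X$ is a diagonal prolongation.

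For the ``only if'' direction I would assume $X=\sum_{\alpha=1}^r b_\alpha X_\alpha^{[m+1]}$ equals $Y^{[m+1]}$ for some vector field $Y$ on $N$, and pass to a coordinate chart on $N$ with coordinates $x=(x^1,\dots,x^n)$, writing $X_\alpha=\sum_i X_\alpha^i(x)\,\partial/\partial x^i$ and $Y=\sum_i Y^i(x)\,\partial/\partial x^i$ and using the induced coordinates $(x_{(1)},\dots,x_{(m+1)})$ on $N^{m+1}$. Comparing coefficients of $\partial/\partial x_{(b)}^i$ on the two sides of $X=Y^{[m+1]}$ yields, for every $b\in\{1,\dots,m+1\}$ and every $i$,
$$\sum_{\alpha=1}^r b_\alpha(x_{(1)},\dots,x_{(m+1)})\,X_\alpha^i(x_{(b)})=Y^i(x_{(b)}).$$
Fixing $a\in\{1,\dots,m+1\}$ and a coordinate index $j$, I would then differentiate this identity, for each $b\neq a$, with respect to $x_{(a)}^j$: the right-hand side depends on $x_{(b)}$ only and is annihilated, leaving $\sum_\alpha (\partial b_\alpha/\partial x_{(a)}^j)\,X_\alpha^i(x_{(b)})=0$ for all $b\neq a$ and all $i$. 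At a fixed point $\vec x$ this says that the scalars $c_\alpha:=(\partial b_\alpha/\partial x_{(a)}^j)(\vec x)$ satisfy $\sum_\alpha c_\alpha\,X_\alpha^{[m]}\big((x_{(b)})_{b\neq a}\big)=0$, where $X_\alpha^{[m]}$ is the diagonal prolongation to the $m$ factors other than the $a$-th. Since the $X_\alpha^{[m]}$ are, by hypothesis, linearly independent on an open dense subset of that copy of $N^m$, all $c_\alpha$ vanish whenever $(x_{(b)})_{b\neq a}$ lies there, i.e.\ on an open dense subset of $N^{m+1}$; by continuity $\partial b_\alpha/\partial x_{(a)}^j\equiv 0$. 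As $a$ and $j$ are arbitrary, $\mathrm{d}b_\alpha=0$, so each $b_\alpha$ is constant.

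The step I expect to need the most care is the bookkeeping around the phrase ``generic point'': I would check that the diagonal prolongations to the $m$ factors complementary to the $a$-th one are again linearly independent on an open dense set — which holds because that product is $N^m$ up to a permutation of factors, under which both diagonal prolongations and genericity are preserved — and that vanishing of $\partial b_\alpha/\partial x_{(a)}^j$ on the preimage of an open dense set under the open projection $N^{m+1}\to N^m$ forgetting the $a$-th factor (itself open dense) upgrades to vanishing everywhere by smoothness. Beyond this, the argument is exactly the short computation above; the crux is to differentiate $\sum_\alpha b_\alpha X_\alpha^i(x_{(b)})=Y^i(x_{(b)})$ in coordinates from a different block $x_{(a)}$, $a\neq b$, and to read the resulting constraint as a linear dependence among the $X_\alpha^{[m]}$, which the hypothesis forbids.
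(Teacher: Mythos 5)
Your proof is correct. The paper does not actually prove this lemma itself---it only cites \cite{CGM07,Dissertationes}---and your argument (comparing coefficients of $\partial/\partial x_{(b)}^i$, differentiating in a block $x_{(a)}$ with $a\neq b$, and reading the resulting relation as a linear dependence among the diagonal prolongations $X_\alpha^{[m]}$ to the complementary $m$ factors, which genericity forbids) is precisely the standard projectability argument given in those references; the only point left implicit is that $N$ (hence $N^{m+1}$) must be connected to pass from $db_\alpha=0$ to $b_\alpha$ constant.
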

	
	\begin{theorem}\label{FLST}
	  If $X$ is an $\mathcal{F}$-foliated Lie system, then it admits an $\mathcal{F}$-foliated superposition rule $\Psi:N^m\times N\rightarrow N$ such that  $
		ms\geq \dim V,
		$
		where $s$ is the dimension of the leaves of $\mathcal{F}$. 
	\end{theorem}
	\begin{proof}
Assume that $X$ is an $\mathcal{F}$-foliated Lie system admitting a decomposition (\ref{DFLS}) whose Vessiot--Guldberg Lie algebra $V$ is such that  $\mathcal{D}^V$ gives the tangent space to the foliation $\mathcal{F}$.  Let us construct an $\mathcal{F}$-foliated superposition rule for $X$.
		
		The $\mathcal{F}$-foliated Lie system $X$ gives rise on each leaf $\mathcal{F}_k$ of $\mathcal{F}$ to a Lie system 
		$$
		X_{k}(t,x)={\displaystyle\sum_{\alpha=1}^r}g_\alpha(t,k)X_\alpha|_{\mathcal{F}_k},
		$$
		where $X_\alpha|_{\mathcal{F}_k}$ is the restriction of $\mathcal{F}_k$ of $X_\alpha$, 
		that can be considered as the restriction to $\mathcal{F}_k$ of a Lie system on $N$ of the form $Y_{(k)}={\displaystyle\sum_{\alpha=1}^r}g_\alpha(t,k)X_\alpha$ possessing an $r$-dimensional Vessiot--Guldberg Lie algebra $V=\langle X_1,\ldots,X_r\rangle$. All Lie systems $Y_{(k)}$ admit a common superposition rule $\Psi:N^m\times N\rightarrow N$, but, in general,  $\Psi(\mathcal{F}_k^{m+1})\neq \mathcal{F}_k$. Hence, the curve $\Psi(x_{(1)}(t),\ldots,x_{(m)}(t),\lambda)$, where  $x_{(1)}(t),\ldots,x_{(m)}(t)$ are particular solutions to $X$ and $\lambda$ is within $\mathcal{F}_k$, is not in general a solution of $X_k$ and, therefore, is not a particular solution to $X$. Let us provide a method to obtain a superposition rule for all the $Y_{(k)}$ such that $\Psi(\mathcal{F}_k^{m+1})\subset \mathcal{F}_k$. This will give an $\mathcal{F}$-foliated superposition rule for $X$, because $\Psi(x_{(1)}(t),\ldots,x_{(m)}(t),\lambda)$ will be a particular solution to $Y_{(k)}$ within $\mathcal{F}_k$ and, therefore, a particular solution to $X$.

		On a neighbourhood of a generic point of $N$, there exists a local coordinate system adapted to the foliation $\mathcal{F}$ with $s$-dimensional leaves. In other words, we can construct a local coordinate system $\{\theta_1,\ldots,\theta_s,I_{s+1},\ldots, I_{n}\}$ whose first $s$ coordinates give rise to a coordinate system on each leaf $\mathcal{F}_k$ and the last $n-s$ coordinates are constant on each leaf of $\mathcal{F}$. Then, one can write in coordinates
		$$
		X_\alpha=\sum_{i=1}^sX^i_\alpha(\theta,I)\frac{\partial}{\partial \theta^i},\qquad \alpha=1,\ldots,r,\qquad \theta=(\theta^1,\ldots,\theta^s),\,\,I=(I^{s+1},\ldots,I^n),
		$$
		for certain functions $X^i_\alpha(\theta,I)$ for $i=1,\ldots,s$ and $\alpha=1,\ldots,r$. Let us restrict ourselves to a generic leaf 
		of $\mathcal{F}$, e.g. $$
		\mathcal{F}_k=\{(\theta_1,\ldots,\theta_s,I_{s+1},\ldots,I_{n})\in N:I_{s+1}=k_{s+1},\ldots,I_{n}=k_{n}\},\qquad k=(k_{s+1},\ldots,k_{n}).
		$$
		The vector fields $X_1,\ldots,X_r$ are tangent to $\mathcal{F}_k$. Hence, their restrictions to $\mathcal{F}_k$ can be considered as vector fields on the leaf. For $m$ large enough, the diagonal prolongations $X_1^{[m]},\ldots,X_r^{[m]}$ on $N^m$ become linearly independent at a generic point  (see \cite{Dissertationes} for a proof). As the vector fields $X_1^{[m]},\ldots,X_r^{[m]}$ are tangent to $\mathcal{F}_k^{m}$, one obtains
		\begin{equation}\label{FLC}
			ms\geq \dim V,
		\end{equation}
		where $s$ is by assumption the dimension of a leaf of $\mathcal{F}$. It is worth comparing  the above expression with Lie's condition, which only shows that $m \dim N
		\geq \dim V$.
		
		To obtain a foliated superposition rule for $X$, consider the diagonal prolongations $X_1^{[m+1]},\ldots,X_{r}^{[m+1]}$ to $N^{m+1}$. Let us define a local coordinate system on $N^{m+1}$ given by $\{\theta^{(a)}_1,\ldots,\theta^{(a)}_s,I^{(a)}_{s+1},\ldots, I^{(a)}_{n}\}$ with $a=0,\ldots, m$. The vector fields $X_1^{[m+1]},\ldots,X_{r}^{[m+1]}$ admit the common first-integrals $\Psi_{s+1}=I^{(0)}_{s+1},\ldots,\Psi_n=I^{(0)}_{n}$. Since $X^{[m+1]}_1,\ldots,X^{[m+1]}_r$ span a distribution of rank $\dim V\leq m\dim N$, one can find, at least, $s$ new functionally independent first-integrals, $\Psi_1,\ldots,\Psi_s$, common to $X_1^{[m+1]},\ldots,X^{[m+1]}_r$ such that 
		%The vector fields $X^{[m+1]}_1,\ldots,X^{[m+1]}_r$ span a distribution of rank $s$ on a manifold of dimension $(m+1)\dim \mathcal{O}$. Then, such vector fields admit $(m+1)\dim \mathcal{O}-r$ functionally independent first-integrals common to all the $X^{[m+1]}_1,\ldots,X_r^{[m+1]}$. Due to inequality \ref{FLC}, one can ensure that at most $m\dim \mathcal{O}-\dim V$  depend on the variables $\theta^1_{(a)},\ldots,\theta^s_{(a)}$ for $a=1,\ldots,m$. Therefore, the remaining ones, $\Psi_1,\ldots,\Psi_d$, must satisfy that
		$$
		\frac{\partial(\Psi_1,\ldots,\Psi_n)}{\partial(\theta_1^{(0)},\ldots,\theta_s ^{(0)},I^{(0)}_{s+1},\ldots,I^{(0)}_{n})}\neq 0
		\Longrightarrow \frac{\partial(\Psi_1,\ldots,\Psi_s)}{\partial(\theta_1^{(0)},\ldots,\theta_s^{(0)})}\neq 0.
		$$ 
		This gives rise to a mapping $\widetilde{\Psi}:N^{m+1}\rightarrow N$ of the form
		$
		\widetilde{\Psi}(\theta^{(0)},I^{(0)},\ldots,\theta^{(m)},I^{(m)})=(\Psi_1,\ldots,\Psi_n).
		$
		Then, one can use the Implicit Function theorem to find the unique mapping $\Phi:N^m\times N\rightarrow N$ such that
		$$
		x_{(0)}=\Phi(x_{(1)},\ldots,x_{(m)},\sigma)\Leftrightarrow \widetilde{\Psi}(x_{(0)},\ldots,x_{(m)})=\sigma.
		$$
		In particular, 
		$
		\Phi(x_{(1)},\ldots,x_{(m)},\lambda)\in \mathcal{F}_{\bar k}
		$ for $\lambda=(\lambda_1,\ldots,\lambda_s,\bar{k})$, where $\lambda_1,\ldots,\lambda_s$ are arbitrary and $\bar k=(I_{s+1}^{(0)},\ldots, I_{n}^{(0)})$. In this way, given $m$ particular solutions $x_{(1)}(t),\ldots,x_{(m)}(t)$ to $X_{\bar{k}}$ belonging to the same leaf $\mathcal{F}_{\bar{k}}$, the mapping 
		$$
		\Phi(x_{(1)}(t),\ldots,x_{(m)}(t),k_1,\ldots,k_s,\bar k)=x(t)
		$$
		gives us every solution $x(t)$ to $X_{\bar{k}}$ within the leaf $\mathcal{F}_{\bar k}$ for every $(k_1,\ldots,k_s,\bar k)\in \mathcal{F}_{\bar k}$. Then, $x(t)$ is a solution of $X$ and the mapping $\Phi$ allows us to obtain all solutions to $X$ within $\mathcal{F}_{\bar{k}}$ out of particular solutions to $X$ in the same leaf and a parameter in $\mathcal{F}_{\bar k}$. %Moreover, the general solution ofexpression $\Psi:N^m\times N\rightarrow N$ of the form
		%$$
		%\Psi(x_{(0)},\ldots,x_{(m)},\lambda,k)=\Phi(x_{(1)}(t),\ldots,x_{(m)}(t),k_1,\ldots,k_s,k_{s+1},\ldots,k_n)
		%$$
		%gives us the general solution to $X$ within the leaf $\mathcal{F}_{\bar{k}}$ in terms of solutions of $X$ within the same leaf.
		
	\end{proof}
	Observe that the last theorem  gives a procedure to construct a foliated superposition rule. This will be detailed and illustrated with examples in Section \ref{HOFSR}. 
	
	\begin{theorem}
		If a system $X$ on $N$
		admits an $\mathcal{F}$-foliated superposition rule $\Psi:N^m\times N\rightarrow N$, then there exists vector fields $X_1,\ldots,X_r$ on $N$ tangent to the leaves of $\mathcal{F}$ and common $t$-dependent constants of motion $f_1,\ldots,f_r\in C^\infty(\mathbb{R}\times N)$ for $X_1,\ldots,X_r$ so that
		$$
		X(t,x)=\sum_{\alpha=1}^rf_\alpha(t,x)X_\alpha(x)
		$$
		and
		\begin{equation}\label{Eq:rel}
			[X_\alpha,X_\beta]=\sum_{\gamma=1}^rh_{\alpha\beta}^\gamma X_{\gamma},\qquad \alpha,\beta=1,\ldots,r,
		\end{equation}
		where $h_{\alpha\beta}^\gamma$ are functions on $N$ taking constant values on the leaves of $\mathcal{F}$.
	\end{theorem}
	\begin{proof}
		Consider that we have an  $\mathcal{F}$-foliated superposition rule for a system $X$ on $N$ and the leaves of the foliation $\mathcal{F}$ have dimension $s$. By the definition of foliated superposition rules, the particular solutions to $X$ are contained in the leaves of $\mathcal{F}$ and the vector fields $X_t$ must be tangent to its leaves. 
		
		Let us fix a point $(x_{(1)},\ldots, x_{(m)})\in \mathcal{F}_k^m$ of a leaf $\mathcal{F}_k$ of $\mathcal{F}$. We can then use the Implicit Function theorem
		to obtain a new function $\widetilde{\Phi}:N^m\times N\rightarrow N$ such that
		$$
		F(x_{(1)},\ldots, x_{(m)},\lambda)=x_{(0)}\,\,\Longleftrightarrow \,\, \widetilde{\Phi}(x_{(0)},x_{(1)},\ldots, x_{(m)})=\lambda.
		$$
		The function $\widetilde{\Phi}$ is constant on generic families of $m+1$ particular solutions  of $X$, let say $x_{(0)}(t),\ldots, x_{(m)}(t)$,  belonging to the same leaf of the foliation $\mathcal{F}$. Therefore,
		$$       
		0=\frac{d}{dt}\widetilde{\Phi}(x_{(0)}(t),\ldots,x_{(m)}(t))=[X_t^{[m+1]}\widetilde{\Phi}](x_{(0)}(t),\ldots,x_{(m)}(t))=0.
		$$
		Recall that the foliated superposition rule induces on $\mathcal{F}^{m+1}_k$ a horizontal foliation over $\mathcal{F}^m_k$ relative to the projection onto the last $m$ copies of $\mathcal{F}_k$, i.e. the projection is a diffeomorphism between a leaf in $\mathcal{F}^{m+1}$ and its projection onto $\mathcal{F}^m$. The coordinates of the function $\widetilde{\Phi}$ take constant values exactly on the leaves of the horizontal foliation. Consequently, the vector fields $\{X^{[m+1]}_t\}_{t\in \mathbb{R}}$ span a distribution $\mathcal{D}_0$ contained in the tangent space to the leaves of the horizontal foliation on $\mathcal{F}^{m+1}_k$. We can extend such a distribution with the linear combinations of successive Lie brackets of $\{X^{[m+1]}_t\}_{t\in \mathbb{R}}$ to obtain a regular distribution $\mathcal{D}$, at a generic point of $\mathcal{F}_{\bar{k}}^{m+1}$, containing $\mathcal{D}_0$ and contained in the tangent space to the leaves of the foliation on $\mathcal{F}_{\bar k}^{m+1}$. 
		
		Consider a finite family of elements $X_1^{[m+1]},\ldots,X_u^{[m+1]}$ forming a local basis of the distribution $\mathcal{D}_0$. As the linear combinations of Lie brackets of diagonal prolongations are diagonal prolongations, the previous basis can be expanded to produce a family $X^{[m+1]}_1,\ldots,X^{[m+1]}_r$  of vector fields  spanning a regular distribution $\mathcal{D}$ almost everywhere. Since the leaves of our horizontal foliation on $\mathcal{F}^{m+1}_{\bar k}$ project diffeomorphically onto $\mathcal{F}^{m}_{\bar k}$, the vector fields $X^{[m]}_1,\ldots,X^{[m]}_r$ on $\mathcal{F}_{\bar k}^{m}$ become linearly independent at a generic point. By Lemma \ref{Lem:LS}, the Lie brackets of $X^{[m+1]}_1,\ldots,X^{[m+1]}_r$ close a finite-dimensional Lie algebra of vector fields. Let us denote by $_kX_1,\ldots,_kX_r$ the restrictions of $X_1,\ldots, X_r$ to $\mathcal{F}_k$. Hence, 
		$$
		[\,_{\bar k}X_\alpha,\,_{\bar k}X_\beta]=\sum_{\gamma=1}^rc_{\alpha\beta}\,^\gamma(\bar{k}) \,_{\bar k}X_\gamma
		$$
		on $\mathcal{F}_{\bar k}$. 
		The previous procedure can be extended to other leaves $\mathcal{F}_{k'}$ for different values $k'$ close enough to $k$ so that the vector fields $X^{[m+1]}_1,\ldots,X^{[m+1]}_r$ for the initial $\mathcal{F}_k$ can also be  used locally. %Hence, $X^{[m+1]}_1,\ldots,X^{[m+1]}_r$ span a finite-dimensional Lie algebra of vector fields on $N^{m+1}$. Then, $X_1,\ldots, X_r$ span a finite-dimensional Lie algebra of vector fields on $N$. 
		Moreover, $X_t^{[m+1]}$ must be on each leaf a linear combination of the $X_1^{[m+1]},\ldots,X_r^{[m+1]}$ with coefficients given by $t$-dependent functions on $N^{m+1}$. Hence, 
		$$
		X_t^{[m+1]}(\xi)=\sum_{\alpha=1}^{r}g_\alpha(t,\xi) X_\alpha^{[m+1]}(\xi),
		$$
		for certain functions $g_\alpha(t,\xi)$.
		Let us restrict the above expression to an arbitrary leaf $\mathcal{F}^{m+1}_k$.  Using Lemma \ref{Lem:LS}, we obtain on $\mathcal{F}_k$ that 
		$$
		X_t(x)=\sum_{\alpha=1}^{r}g_\alpha(t,x) X_\alpha(x),
		$$
		and $g_\alpha(t,x)=g_\alpha(t,x')$ for points $x,x'$ in the same leaf of $\mathcal{F}$ and every $t\in \mathbb{R}$. 
	\end{proof}
	
	The proof of the last theorem almost reassembles the proof of the Lie--Scheffers theorem. In despite of that, there exists a relevant difference. At the very end, we cannot ensure that we obtain a foliated Lie system. The problem is that the vector fields $X_1,\ldots,X_r$, which are tangent to the leaves of $\mathcal{F}$, may close a different Lie algebra on each leaf (see  the example in Section \ref{Se:SLS}). Therefore, there will be no vector fields $\hat{X}_\alpha$ such that $X_\alpha={\displaystyle \sum_{\beta=1}^r}f_{\alpha\beta} \hat{X}_\beta$  for functions $f_{\alpha\beta}$ that are constants on the leaves of $\mathcal{F}$ in such a way that $\langle\hat{X}_1,\ldots,\hat{X}_r\rangle$ is a Lie algebra of vector fields on $N$. This is due to the fact that if such vector fields exist, then the vector fields $X_1,\ldots,X_r$ must span a Lie subalgebra of the one spanned by $\hat X_1,\ldots,\hat X_r$.

	%Note that the superposition rule obtained above for $X$ is also a superposition role for all the systems with the same
	%Vessiot Guldberg Lie algebra of $X$. This is due to the fact that particular solutions are such that $\Psi^1,\ldots,\Psi^d$ are constants on $(m+1)$-tuples of such solutions and, by adding the functions $I^1_{(0)},\ldots, I^{n-d}_{(0)}$ we obtain a set of $n$ functions constants on the $(m+1)$-tuples given respect to a superposition rule. It is convenient to remark that nor every superposition rule for the associated Vessiot--Guldberg Lie algebra gives rise to a superposition rule for the associated foliated Lie systems. It may happen that such a  superposition rule does not map solutions on a leaf onto solutions of the same leaf.

	\section{Automorphic foliated Lie systems}\label{Sec:AFLS}
	We showed in Section \ref{Intro} that the evolution of a Lie system can be determined by an automorphic Lie system. More generally, we are going to prove that the evolution of a foliated Lie system
	can be obtained via a foliated Lie system on a principal
	bundle of a particular type.
	\begin{definition}
		Consider a trivial principal bundle $\pi: G\times M\rightarrow M$ with structural $r$-dimensional Lie group $G$ acting on $G\times M$ by $\varphi_\pi:(g_1,(g_2,k))\in G\times (G\times M)\mapsto (g_1g_2,k)\in G\times M$ and a $t$-dependent
		vector field  on $G\times M$ given by
		\begin{equation}
			\label{PAut}
			X_T^R(t,g,k) =\sum_{\alpha=1}^{r}f_\alpha(t,k) X^R_\alpha(g,k),
		\end{equation}
		where  $X^R_1,\ldots,X^R_r$ form a basis of fundamental vector fields of the action of $G$ on $G\times M$, and $f_1,\ldots,f_r\in C^\infty(\mathbb{R}\times G\times M)$ are $t$-dependent common constants of motion of $X^R_1,\ldots,X_r^R$, which therefore can be considered as functions on $\mathbb{R}\times M$. We call (\ref{PAut}) an {\it automorphic foliated Lie system}.
	\end{definition}
	
Note that every vector field on $G$ can be considered in a natural way as a vector field on $G\times M$ via the vector bundle isomorphism $T(G\times M)=TG\times TM$. Indeed, they can be considered as vertical vector fields relative to the projection $\pi:G\times M\rightarrow M$. Then, $X^R_1,\ldots,X^R_r$ can be understood as right-invariant vector fields  on $G$. They also span a finite-dimensional Lie algebra of vector fields on $G\times M$ and a regular $r$-dimensional generalised distribution. In fact, at each point of the bundle the values of $X^R_1,\ldots,X_r^R$  span the vertical space at such point of the  bundle $\pi:G\times M\rightarrow M$.
	Since $f_1,\ldots,f_r$ are common $t$-dependent constants of motion for vertical vector fields, they are constant on the fibres of $\pi:G\times M\rightarrow M$ and they can be considered, in a unique manner, as $t$-dependent functions on $M$. Therefore,	(\ref{PAut}) is a foliated Lie system. 
	
	When $M$ is a point,  $f_1,\ldots,f_r$ become functions depending only on $t$ and (\ref{PAut}) turns into a standard automorphic Lie system \cite{Dissertationes}. More generally, every trivial principal bundle $\pi:P\rightarrow M$ with a structural group $G$ and fundamental vector fields spanned by $X_1,\ldots,X_r$ gives rise to a $t$-dependent vector field $X$ on $P$ of the form
	$$
	X(t,x)=\sum_{\alpha=1}^rf_\alpha(t,x)X_\alpha(x),\qquad \forall x\in P,\quad \forall t\in \mathbb{R},
	$$
	where $f_1,\ldots,f_r\in C^\infty(\mathbb{R}\times P)$ are $t$-dependent constants of motion of $X_1,\ldots,X_r$. The trivialisation of the principal bundle $\pi:P\rightarrow M$ maps diffeomorphically $X$ onto  $X^R_\pi$. Consequently, $X$ is, up to a local bundle diffeomorphism, an automorphic foliated Lie system.
	
	Let us prove that, in analogy with Lie systems, every $\mathcal{F}$-foliated Lie system gives rise to an automorphic foliated Lie system, whose solutions allow us to obtain the general solution of the foliated Lie system. Recall that every $t$-dependent function on a fiber bundle that is a $t$-dependent constant of motion of its vertical vector fields can be considered as a $t$-dependent function on the base manifold in a canonical way.

	\begin{theorem}\label{FLSAFLS} Let $X(t,x)={\displaystyle  \sum_{\alpha=1}^{r}}f_\alpha(t,x) X_\alpha(x)$ be a foliated Lie system on $N$ associated with an $r$-dimensional Vessiot--Guldberg Lie algebra $V=\langle X_1,\ldots,X_r\rangle$. Let $\phi:(g,x)\in G\times N\mapsto \phi_x(g)=\phi(g,x)\in N$ be the effective local Lie group action associated with the integration of the vector fields of  $V$. Assume that the space $M$ of leaves of $\mathcal{D}^V$ admits a manifold structure so that the  $\pi:N\rightarrow N/G=M$. Let us  define the automorphic foliated Lie system on the total space of the principal bundle $\pi:G\times M\rightarrow M$  given by
		\begin{equation}\label{RAFLS}
			X_\pi^R(t,g,k)=-\sum_{\alpha=1}^rf_\alpha(t,k)X^R_\alpha(g,k),\qquad \forall t\in \mathbb{R},\quad \forall g\in G,\quad \forall k\in M,
		\end{equation}
		where  $-X^R_\alpha$ and $X_\alpha$ are the fundamental vector fields associated with the same element of $T_eG$ relative to the actions $\varphi_\pi$ and $\phi$, correspondingly. Then, each particular solution $x(t)$ of $X$ contained in the leaf indexed by $k\in M$ can be written as
		\begin{equation}\label{Rel}
			x(t)=\phi(g(t),x(0)),
		\end{equation}
		where $\gamma:t\in \mathbb{R} \mapsto (g(t),k)\in G\times M$ is a particular solution to $X_\pi^R(t,g,k)$ with $g(0)=e$.
	\end{theorem}
	\begin{proof} Let us prove that $x(t)$ given by (\ref{Rel}) is a particular solution to $X$ for every $x(0)$. Using (\ref{Rel}), we have that
		\begin{equation}\label{Exp1}
			\frac{dx}{dt}(t_0)=\frac{d}{dt}\bigg|_{t=t_0}\phi(g(t)g^{-1}(t_0),\phi(g(t_0),x(0)))=T_e\phi_{\phi(g(t_0),x(t_0))}\left(\frac{d}{dt}\bigg|_{t=t_0}g(t)g^{-1}(t_0)\right).
		\end{equation}
		Since $(g(t),x)$ is a particular solution to (\ref{RAFLS}) and $X^R(g)=R_{g*e}X^R(e)$ for every $g\in G$, one has  that
		$$
		\frac{dg}{dt}(t_0)=-\sum_{\alpha=1}^rf_\alpha(t_0,k)R_{g(t_0)*e}X_\alpha^R(e)\Rightarrow 	R_{g^{-1}(t_0)*g(t_0)}\frac{dg}{dt}(t_0)=-\sum_{\alpha=1}^rf_\alpha(t_0,k)X^R_\alpha(e).
		$$
		Since $-X_\alpha^R$ and $X_\alpha$ are fundamental vector fields related to the same element of $T_eG$ relative to the $\phi$ and $\varphi_\pi$ actions, one obtains, after considering $X_\alpha^R$ as a right-invariant vector field in the natural way, that $-T_e\phi_{x}X_\alpha^R(e)=X_\alpha(x)$ for every $x\in N$ and $\alpha=1,\ldots,r$. Substituting this relation in (\ref{Exp1}) and since the functions $f_\alpha(t,x)$ are just $t$-dependent on the orbits under the action of $\phi$, we obtain
		$$
		\frac{dx}{dt}(t_0)=-T_e\phi_{\phi(g(t_0),x(t_0))}\left[\sum_{\alpha=1}^rf_\alpha(t_0,k)X^R_\alpha(e)\right]=\sum_{\alpha=1}^rf_\alpha(t_0,x(t_0))X_\alpha(x(t_0)).
		$$
		In fact, it worth noting that $f_\alpha(t,x(t_0))$ depends only on $t$ and $\pi(x(t_0))$.
		Therefore, $\phi(g(t),x(0))$ is a solution of $X$ for every $x(0)\in N$. Note that every solution of $X$ on $N$ is of the above form and the result of our theorem follows.
	\end{proof}
	
	\section{Applications}
	Let us use our results of previous sections to study several physical problems. First,  we focus our attention on an extension of the generalised Ermakov system \cite{Le91}. Next, we study Lax pairs for $t$-dependent Hamiltonian systems, extending and explaining certain results in \cite{BV90}. In this case, we obtain the automorphic foliated Lie system associated with it. We also analyse the existence of related Hamiltonian structures, which extends, in a geometric way, some of the results given in \cite{BV90}.

	\subsection{A new class of generalised Ermakov systems}
	
	There exists an extensive literature on the so-called Ermakov systems and their generalisations (see \cite{CLR08c,Go90,Le91,Ra80,RR79,RR80} and references therein). We propose here a class of generalised Ermakov systems that cannot be described through Lie systems but they admit a description in terms of foliated ones. 
	
	%There exists an interest in giving different types of generalisations of Ermakov systems \cite{Le91}
	%$$
	%\frac{d^2x}{dt}=-\omega^2(t)x+\frac{k}{x^3},\qquad \frac{d^2y}{dt}=-\omega^2(t)y.
	%$$
	%In part its interest is due to the presence of a Lewis invariant
	%$$
	%I=\frac 12\left(x\frac{dy}{dt}-y\frac{dx}{dt}\right)^2+\frac k2\frac{y^2}{x^2}.
	%$$
	Ray and Reid introduced the so-called {\it generalised Ermakov systems} \cite{RR79,RR80}  on $\mathbb{R}_0^2=\{(x,y):xy\neq 0\}$ of the form
	\begin{equation}\label{GEE}
		\ddot x=-\omega^2(t)x+\frac{g(y/x)}{yx^2},\qquad \ddot y=-\omega^2(t)y+\frac{f(x/y)}{xy^2},\qquad (x,y)\in \mathbb{R}^2_0,\quad t \in \mathbb{R},
	\end{equation}
	where $f$ and $g$ are real functions $f,g:\mathbb{R}\rightarrow \mathbb{R}$ and the dots over the variables $x,y$ stand for their time  derivatives. The equation introduced by Milne \cite{M30} was given by one of the above equations, let us say the one depending on $y$ with $f(x/y)=x/y$,  and its mathematical study can be found in \cite{P50}. Meanwhile, the so-called  Ermakov--Pinney  system  	corresponds to the particularisation $g(u)=u$ and $f(u)=0$ for every $u\in \mathbb{R}$. This generalisation admits a constant of motion, the so-called {\it generalised Lewis invariant} (see \cite{Dissertationes} and references therein), of the form
\begin{equation}\label{defI}
	I(x,y,\dot x,\dot y)=\frac 12\left(x\dot y-y\dot x\right)^2+\int^{x/y}[f(u)-u^{-2}g(1/u)]du.
	\end{equation}
	It was noted by Ray, Reid, and Goedert \cite{Go90,Ra80,RR80}, that the term $\omega^2(t)$ can be replaced by much more general expressions (cf. \cite{Le91}). For instance, there exist generalisations of (\ref{GEE}) where $\omega(t)$ depends on the time-derivatives of $x$ and $y$ \cite{RR80}.  As a new generalisation, we propose the second-order system of differential equations given by
	\begin{equation}\label{Leach}
		\ddot x={-\omega^2(t,I)x}+\frac{g(I,y/x)}{x^2y },\quad \ddot y={-\omega^2(t,I)y}+\frac{f(I,x/y)}{xy^2 },\quad (x,y)\in \mathbb{R}_0^2,\quad t\in\mathbb{R},
	\end{equation}
	where $f,g:\mathbb{R}\rightarrow \mathbb{R}$ are arbitrary non-vanishing functions and $I$ is given by (\ref{defI}). In the case where $f(I,u)=u$ and $g(I,u)=u$ for every $u\in \mathbb{R}$, one recovers the generalised Ermakov system studied in \cite{Le91}. 
	Consider the system of first-order differential equations on $T\mathbb{R}_0^2$ of the form
$$
	\left\{
	\begin{array}{rcl}
	\dfrac{dx}{dt}&=&v_x,\\ \dfrac{dy}{dt}&=&v_y,\\
	\dfrac{dv_x}{dt}&=&{-\omega^2(t,I)x}+\dfrac{g(I,y/x)}{x^2y },\\
  \dfrac{dv_y}{dt}&=&{-\omega^2(t,I)y}+\dfrac{f(I,x/y)}{y^2x },
  \end{array}\right.
	$$
	obtained by adding the variables $v_x=\dot x$ and $v_y=\dot y$ to (\ref{Leach}) and where $I$ is a function as in (\ref{defI}) but with $\dot x$ and $\dot y$ replaced by $v_x$ and $v_y$, respectively. This system is associated with the $t$-dependent vector field $X=\omega^2(t,I)X_3+X_1$ on $T\mathbb{R}_0^2$, where the vector fields 
$$
	\begin{array}{rcl}
	X_1&=&\dfrac{f(I,x/y)}{xy^2}\dfrac{\partial}{\partial v_y}+v_y\dfrac{\partial}{\partial y}+\dfrac{g(I,y/x)}{x^2y }\dfrac{\partial}{\partial v_x}+v_x\dfrac{\partial}{\partial x},\\
	X_2&=&\dfrac 12\left[y\dfrac{\partial}{\partial y}-v_y\dfrac{\partial}{\partial v_y}+x\dfrac{\partial}{\partial x}-v_x\dfrac{\partial}{\partial v_x}\right],\\
	X_3&=&-y\dfrac{\partial}{\partial v_y}-x\dfrac{\partial}{\partial v_x},\end{array}
	$$
	have the commutation relations
	$$
	[X_1,X_2]=X_1,\qquad [X_1,X_3]=2X_2,\qquad [X_2,X_3]=X_3{,}
	$$
	and therefore span a Lie algebra of vector fields $V^{gES}$ isomorphic to $\mathfrak{sl}(2,\mathbb{R})$. A straightforward calculation shows that $I$ is a common first-integral to $X_1,X_2,X_3$ and $X_1\wedge X_2\wedge X_3\neq 0$ on a  dense open subset  $\mathcal{O}\subset T\mathbb{R}_0^2$. Then, $X$ becomes a foliated Lie system on $\mathcal{O}\subset T\mathbb{R}^2_0$. 
	
	Let us comment on fact that (\ref{Leach}) admits a Lie algebra of Lie symmetries isomorphic to $\mathfrak{sl}(2,\mathbb{R})$ \cite{Le91}.  Indeed, note that the restriction of $X$ to every leaf of the foliation determined by the integral leaves of the distribution with $\mathcal{D}^{V^{gES}}$ becomes a Lie system. Since $X_1\wedge X_2\wedge X_3\neq 0$ on these leaves, one obtains a so-called {\it locally automorphic} Lie system on each leaf (see \cite{GMLV19}). It was proved in \cite{GMLV19} that such a Lie system admits a Lie algebra of Lie symmetries isomorphic to $\mathfrak{sl}(2,\mathbb{R})$. Gluing together these vector fields on each leaf, we obtain a Lie algebra of Lie symmetries of $X$ on $T\mathbb{R}_0^2$ isomorphic to $\mathfrak{sl}(2,\mathbb{R})$. This feature is common to many other generalisations of Ermakov systems \cite{Le91}.
	
	% It is remarkable that $X$ is not in general a Hamiltonian system relative to the symplectic structure $\omega_0={ dx}\wedge{ dy}$. Nevertheless, the restriction of (\ref{Leach}) to the integral submanifolds of the distribution spanned by $X_1,X_2,X_3$ gives rise to a Hamiltonian system on each leaf relative to $\omega_0$. We call then {\it foliated Lie--Hamilton system} a foliated Lie system that is Hamiltonian on each leaf of its foliation relative to a Poisson structure. This generalises the notion of Lie Hamilton system introduced in \cite{CLS13}. 
	%
	%The theory of Lie-Hamilton systems can be generalised in an immediate way to the realm of foliated Lie--Hamilton systems by considering that every foliated Lie--Hamilton system is a Lie--Hamilton system on each leaf of the associated foliation. For instance, one can apply as follows the theory to obtain constants of motions of such systems. 
	
	%\subsection{An isotropic oscillator whose frequency depends on the angular momentum}
	%
	%\subsection{A particle in a central $t$-dependent magnetic field}
	%Consider the system of differential equations of the form
	%$$
	%\frac{dv}{dt}=qv\times B(t,v),
	%$$
	%where $q$ is the charge of the particle and $B(t,v)$ is a velocity-dependent magnetic field
	
	\subsection{How to obtain foliated superposition rules}\label{HOFSR}
	The proof of Theorem \ref{FLST} shows in an implicit way how to obtain a foliated superposition rule for a foliated Lie system. This section aims to illustrate this method  and to describe it  in detail. A careful reading of the proof of Theorem \ref{FLST} shows that the steps of the method go as follows:
	
	\begin{itemize}
		\item Consider a Vessiot--Guldberg Lie algebra $V_{\mathcal{F}}$ of an  $\mathcal{F}$-foliated Lie system $X$ on an $d_T$-dimensional manifold $T$. 
		%\item Write $X$ in a coordinate system system adapted to $\mathcal{F}$.
		\item Find the smallest $m\in \mathbb{N}$ so that the diagonal prolongations of the vector fields of $V_{\mathcal{F}}$ span a distribution of rank $\dim V_{\mathcal{F}}$ at a generic point of $T^m$.  This states the number of particular solutions of the foliated superposition rule, namely $m$.
		\item Consider a coordinate system $\theta^1,\ldots,\theta^s,I^{s+1},\ldots, I^{d_T}$, adapted to the foliation $\mathcal{F}$ around a generic point $x\in T$, i.e. the first $s$ coordinates give rise to a local coordinate system on each leaf $\mathcal{F}_k$ of $\mathcal{F}$, while the last ${d_T}-s$ coordinates are constant on the leaves of $\mathcal{F}$. Define the same coordinate system $\theta^1,\ldots,\theta^s,I^{s+1},\ldots, I^{d_T}$ on each copy $T$ within $T^{m+1}$. This gives rise to a coordinate system on $T^{m+1}$ of the form $\theta_{(a)}^1,\ldots,\theta_{(a)}^s,I_{(a)}^{s+1},\ldots, I_{(a)}^{d_T}$ for $a=0,\ldots,m$. 
		\item Define $\Psi^{s+1}=I_{(0)}^{s+1},\ldots,\Psi^{{d_T}}= I_{(0)}^{d_T}$  and obtain $s$ common first-integrals $\Psi^1,\ldots, \Psi^s$ for the diagonal prolongations of the elements of $V_{\mathcal{F}}$ to $T^{m+1}$ satisfying that
		$$
		\frac{\partial(\Psi^1,\ldots,\Psi^{d_T})}{\partial (\theta_{(0)}^1,\ldots,\theta_{(0)}^s,I_{(0)}^{s+1},\ldots, I_{(0)}^{d_T})}\neq 0\Leftrightarrow \frac{\partial(\Psi^1,\ldots,\Psi^s)}{\partial (\theta_{(0)}^1,\ldots,\theta_{(0)}^s)}\neq 0.
		$$
		\item Assume $\Psi^1=k_1,\ldots, \Psi^{d_T}=k_{d_T}$ and obtain $\theta^1_{(0)},\ldots,\theta^s_{(0)},I_{(0)}^{s+1},\ldots,I_{(0)}^{d_T}$ as a function of $k_1,\ldots,k_{d_T}$ and $\theta_{(a)}^1,\ldots,\theta_{(a)}^s,I_{(a)}^{s+1},\ldots, I_{(a)}^{d_T}$ for $a=1,\ldots,m$, i.e.
		$$
		\theta^i_{(0)}=F^i(\theta_{(1)},I_{(1)},\ldots,\theta_{(m)},I_{(m)},k_1,\ldots, k_{d_T}),\qquad
		I^j_{(0)}=k_j,
		$$
		for certain functions $F^i:T^m\times T\rightarrow \mathbb{R}$, with $i=1,\ldots,s$ and $j=s+1,\ldots,{d_T}$.  This gives rise to a superposition rule $\Phi:T^m\times T\rightarrow T$ for every Lie system with a Vessiot--Guldberg Lie algebra $V_{\mathcal{F}}$ of the form (see {\cite{CGM07,Dissertationes,LS20}} for details)
		$$
		\Phi(\theta_{(1)},I_{(1)},\ldots,\theta_{(m)},I_{(m)},k_1,\ldots,k_{d_T})=(F^1,\ldots,F^s,I^{s+1}_{(0)},\ldots,I^{d_T}_{(0)}).
		$$
%	 Assume in these expressions $k_{s+1}=I_{(1)}^{s+1},\ldots,k_n=I_{(1)}^n$. This allows us to define a map $\Psi:N^m\times \mathbb{R}^s\rightarrow N$ of the form
		%	$$
		%	\Psi(\theta_{(1)},I_{(1)},\ldots,\theta_{(m)},I_{(m)},k_1,\ldots,k_s)=\Phi(\theta_{(1)},I_{(1)},\ldots,\theta_{(m)},I_{(m)},k_1,\ldots,k_s,I_{(1)}).
		%	$$
		Hence, the map $\Phi:T^m\times T\rightarrow T$ becomes an $\mathcal{F}$-foliated superposition rule for $X$ at a generic point of $T$.
	\end{itemize}
	
	Let us illustrate our method by applying it to the Lax pair (\ref{FLS3}). In this case, the manifold where the Lax pair is defined is $2n$-dimensional and $s=n$. We recall that the foliated Lie system given by the Lax pair  (\ref{FLS3}) was related to the Vessiot--Guldberg Lie algebra $V^{\mathfrak{g}^{lp}}$. The vector fields of a basis of $V^{\mathfrak{g}^{lp}}$ are linearly independent at a generic point. Hence, we can obtain a foliated superposition rule depending on one particular solution. 
	The coordinates $\theta^1=v^1,\ldots,\theta^n=v^n,I^1=v^{n+1},\ldots,I^n=v^{2n}$ are adapted to the foliation $\mathcal{F}^{\mathfrak{g}^{lp}}$ of the system under study.  Consider the coordinate system on $(\mathfrak{g}^{lp})^2$ of the form $\theta_{(a)}^1,\ldots,\theta_{(a)}^n,I_{(a)}^1,\ldots,I_{(a)}^n$ with $a=0,1$. Take the diagonal prolongations of the elements of $V^{\mathfrak{g}^{lp}}$ to $(\mathfrak{g}^{lp})^2$. To obtain $2n$ functionally independent constants of motion for such diagonal prolongations choose  $\Psi^{s+1}=I^{1}_{(0)},\ldots,\Psi^{n}=I^n_{(0)}$ and  $\Psi^i=\theta^i_{(0)}-\theta^i_{(1)}$ for $i=1,\ldots,n$. Then,
	$$
	\frac{\partial (\Psi^1,\ldots,\Psi^n)}{\partial( \theta^1_{(0)},\ldots, \theta^n_{(0)},I^1_{(0)},\ldots I^{n}_{(0)})}\neq 0.
	$$
	By fixing $\Psi^1=k_1,\ldots,\Psi^{2n}=k_{2n}$, a superposition rule $\Phi: \mathfrak{g}^{lp}\times  \mathfrak{g}^{lp}\rightarrow \mathfrak{g}^{lp}$ for every Lie system with a Vessiot--Guldberg Lie algebra $V^{\mathfrak{g}^{lp}}$ reads
	$$
	\Phi(\theta_{(1)},I_{(1)},k)=(\theta^1_{(1)}+k_1,\ldots,\theta^n_{(1)}+k_n,k_{n+1},\ldots,k_{2n}).
	$$
	Restricting oneself to the case  $k_{n+1}=I^1_{(1)},\ldots,k_{2n}=I^n_{(1)}$, one gets an $\mathcal{F}^{\mathfrak{g}^{lp}}$-foliated superposition rule $\Psi^\mathfrak{g}:\mathfrak{g}^{lp}\times \mathfrak{g}^{lp}\rightarrow\mathfrak{g}^{lp}$ such that the particular solutions to $X^{lp}$ in the leaf $\mathcal{F}_k^{\mathfrak{g}^{lp}}$, with $k=I_{(1)}\in \mathbb{R}^n$, are of the form
	$$
	\Psi(\theta_{(1)}(t),I_{(1)},k_1,\ldots,k_{2n})=(\theta^1_{(1)}+k_1,\ldots,\theta^n_{(1)}+k_n,I^1_{(1)},\ldots,I^n_{(1)})
	$$
	for a particular solution $(\theta_{(1)}(t),I_{(1)})$ of $X^{lp}$ in $\mathcal{F}_k^{\mathfrak{g}^{lp}}$.
	\subsection{Lax pairs and automorphic Lie systems}
	Let us study the systems (\ref{FLS1}) and (\ref{FLS3}) by means of a common automorphic foliated Lie system. 
	
	The foliated Lie system (\ref{FLS1}) is associated with a Vessiot--Guldberg Lie algebra $V^{HJ}$ of the form (\ref{VI}), which is isomorphic to the Lie algebra $(\mathbb{R}^n,+)$. We denote by $\{\lambda_1,\ldots,\lambda_n\}$ the dual basis to the canonical basis $\{e_1,\ldots,e_n\}$ on $\mathbb{R}^n$.
	The Lie group action obtained by integrating the vector fields of $V^{HJ}$ reads
	$$
	\begin{array}{rccc}
		\varphi:&\mathbb{R}^{n}\times T^*\mathbb{R}^n&\rightarrow& T^*\mathbb{R}^n,\\
		&(\lambda,Q,P)&\mapsto&(Q-\lambda,P),
	\end{array}
	$$
	where we denote $Q=(Q_1,\ldots,Q_n)$, $\lambda=(\lambda_1,\ldots,\lambda_n)$, and $P=(P_1,\ldots,P_n)$. 
	Observe that the Lie group action has been chosen so that the fundamental vector fields of the elements of the basis $\{e_1,\ldots,e_n\}$ of the Lie algebra $(\mathbb{R}^n,+)$ be $\{\partial/\partial Q_1,\ldots, \partial/\partial Q_n\}$, respectively. 
	The space of leaves of the distribution spanned by the elements of $V^{HJ}$ is diffeomorphic to the manifold $M=\mathbb{R}^n$. Indeed, the variables $P_1,\ldots,P_n$ on $T^*\mathbb{R}^n\simeq \mathbb{R}^{2n}$ can be considered as a global coordinate system on $M$, which parametrises the leaves of the foliation $\mathcal{F}^{HJ}$. 
	
	The automorphic foliated Lie system related to (\ref{FLS1})  is, in virtue of Theorem \ref{FLSAFLS}, defined on the $(\mathbb{R}^n,+)$-principal bundle  $\pi: \mathbb{R}^n\times M\to  M$, $\pi:(\lambda,P) \mapsto P$, and it reads
	$$
	X(t,\lambda,P)=-\sum_{\alpha=1}^n\frac{\partial H}{\partial P_\alpha}(t,P)\frac{\partial}{\partial \lambda_\alpha}.
	$$
	
	Consider now the Lax pair given by (\ref{FLS2}) with 
	\begin{equation}\label{Casem}
		m(t,v)=\sum_{\alpha=1}^{n}\frac{\partial H}{\partial P_\alpha}(t,\lambda)e_\alpha.
	\end{equation}
	This particular value of $m(t,v)$ was studied in \cite{BV90} for $H$ being $t$-independent and it was shown to lead to a Lax pair for (\ref{FLS1}) under a simple change of variables. 
	
	The system (\ref{FLS3}) admits a Vessiot--Guldberg Lie algebra $V^{\rm ad}$ with a basis given by $$\left\{Y^s_1=2v^{n+1}\frac{\partial}{
	\partial v^1},\ldots,Y^s_n=2v^{2n}\frac{\partial}{\partial v^n}\right\}.
$$
 Such vector fields span an Abelian Lie algebra isomorphic to $(\mathbb{R}^n,+)$. The distribution spanned by the vector fields of $V^{\rm ad}$  on the submanifold of $\mathfrak{g}^{gl}$ of the form
$$
\mathcal{O}^{\rm ad}:=\left\{(v^1,\ldots,v^{2n})\in \mathfrak{g}^{gl}:\prod_{\alpha=n+1}^{2n}v^\alpha\neq 0\right\},
$$
gives rise to a family of leaves of the form (\ref{Leafg}) for $\prod_{i=1}^{n}k_{i}\neq 0$. Note indeed that $\mathcal{O}^{\rm ad}$ is the submanifold of $\mathfrak{g}^{gl}$ where the coordinates $\theta^i,I^i$ used in \cite{BV90} make sense.
	Therefore, the variables $v^{n+1},\ldots,v^{2n}$ can be considered as a coordinate system on the space of leaves, $M^{\rm ad}$, of $V^{\rm ad}$ within $\mathcal{O}^{\rm ad}$, which becomes a manifold locally diffeomorphic to $\mathbb{R}^n$. It is indeed an open subset of $\mathbb{R}^n$.
	
	The vector fields $Y^s_1,\ldots,Y^s_n$ can be integrated to obtain a local Lie group action
	$$
	\begin{array}{rccc}
		\varphi_{\mathfrak{g}^{gl}}:&\mathbb{R}^{n}\times \mathcal{O}^{\rm ad}&\longrightarrow& \mathcal{O}^{\rm ad},\\
		&(\lambda;v^1,\ldots,v^{2n})&\mapsto&(v^1-2\lambda_1v^{n+1},\ldots, v^n-2\lambda_nv^{2n},v^{n+1},\ldots,v^{2n}),
	\end{array}
	$$
	such that  the fundamental vector field of  $e_i$ in the canonical basis $\{e_1,\ldots,e_n\}$ of the Lie algebra $(\mathbb{R}^n,+)$ of $\mathbb{R}^n$  is $Y^s_i$ for $i=1,\ldots,n$.  
	The automorphic foliated Lie system associated with this foliated Lie system on $\mathcal{O}^{\rm ad}$, i.e. 
	\begin{equation}\label{FLS4}
	\frac{dv}{dt}=\sum_{\alpha=1}^ng_\alpha(t,v)X^{\mathfrak{g}^{lp}}_\alpha(v)=\sum_{\alpha=1}^nf_\alpha(t,v)Y^s_\alpha,\quad g_\alpha(t,v)=\frac{\partial H}{\partial P_{\alpha+n}}v^{\alpha+n},\quad \alpha=1,\ldots,n,
	\end{equation}
	reduces to the form of a $t$-dependent vector fields on $\mathbb{R}^{n}\times M^{\rm ad}$ of the form
	$$
	X(t,\lambda,v)=-\sum_{\alpha=1}^n\frac{\partial H}{\partial P_{\alpha+n}}(t,v^{n+1},\ldots,v^{2n})\frac{\partial}{\partial \lambda_\alpha}.
	$$
	Consequently, the solution to the Lax pair (\ref{FLS2}) on $\mathcal{O}^{\rm ad}$ for the particular value of $m(t,v)$ given in (\ref{Casem}) reduces to the same automorphic Lie system as the foliated Lie system (\ref{FLS1}) on the submanifold $\mathcal{H}$ of $T^*\mathbb{R}^n$ where $P_1\cdot\ldots\cdot P_n\neq 0$. Moreover, there exists a diffeomorphism $\phi:\mathcal{H}\subset T^*\mathbb{R}^n\rightarrow \mathcal{O}^{\rm ad}$ mapping (\ref{FLS1}) onto (\ref{FLS2}). It is easy to see that when two foliated Lie systems are diffeomorphic, they share the same foliated  automorphic Lie system. It is also immediate that foliated Lie systems related to the same automorphic foliated Lie system do not need to be diffeomorphic as they may be defined on manifolds of different dimension.
	\subsection{Stratified Lie--Hamilton systems and $r$-matrices}
	Let us illustrate the use of Poisson structures to investigate stratified Lie systems through a couple of examples. This suggests how to generalise the theory of Lie-Hamilton systems in \cite{CLS13} to the realm of stratified Lie systems. As a byproduct, several results on the use of $r$-matrices to study stratified Lie systems will be provided, which generalises previous findings from \cite{BV90}.
	
	Let $\{e_1,\ldots,e_{d}\}$ be a basis of the Lie algebra $\mathfrak{g}$ of a Lie group $G$ and let $\{v^1,\ldots,v^d\}$ be its dual basis. Consider the non-autonomous first-order system of differential equations on $\mathfrak{g}$ given by 
		 \begin{equation}\label{AdFLS}
		\frac{dx}{dt}=\sum_{\alpha=1}^{d}f_\alpha (t,x)X^{\rm ad}_{e_\alpha}(x),\qquad \forall x\in \mathfrak{g},\qquad \forall t\in \mathbb{R},
	\end{equation}
	where $X^{\rm ad}_{e_1},\ldots,X^{\rm ad}_{e_d}$ are the fundamental vector fields of the adjoint action of $G$ on $\mathfrak{g}$ induced by $\{e_1,\ldots,e_d\}$, respectively, and $f_1,\ldots,f_d$ are common $t$-dependent constants of motion for all the vector fields  $X^{\rm ad}_{e_1},\ldots,X^{\rm ad}_{e_d}$. Note that (\ref{AdFLS}) is a stratified Lie system on $\mathfrak{g}$.
	
	Let $x_0\in \mathcal{D}^{\rm ad}$  be a point where the rank of the distribution $\mathcal{D}^{\rm ad}$ spanned by $X^{\rm ad}_{e_1},\ldots,X_{e_d}^{\rm ad}$ reaches its maximum, $a_{\rm max}$, on $\mathfrak{g}$. Then, there exist $a_{\rm max}$ vector fields  taking values in $\mathcal{D}^{\rm ad}$ which are linearly independent at $x_0$.  Such vector fields will also be linearly independent at every point of a local neighbourhood $U$ of $x_0$, which causes  the rank of $\mathcal{D}^{\rm ad}$ to be $a_{\rm max}$ on $U$ (cf. \cite{Va94}). Hence,  (\ref{AdFLS}) may be restricted to an open submanifold $\mathcal{O}^{\rm ad}\subset \mathfrak{g}$, where $\mathcal{D}^{\rm ad}$ is a regular distribution of rank $a_{\max}$. Then, system (\ref{AdFLS}) becomes a foliated Lie system on $\mathcal{O}^{\rm ad}$. 
	
	Let us endow  $\mathfrak{g}$ with a Poisson structure so as to study (\ref{AdFLS}) via Poisson geometry techniques. This will provide an intrinsic geometric definition of the so-called {\it Kirillov bracket} on $\mathfrak{g}$ introduced in \cite{BV90} in an algebraic implicit manner. Assume that $\mathfrak{g}$ admits an  ${\rm ad}$-invariant non-degenerate constant metric ${\bf g}= {{\displaystyle\sum_{\alpha,\beta=1}^d}}g_{\alpha\beta}v^\alpha\otimes v^\beta$, i.e. ${\bf g}([x,x'],x'')+{\bf g}(x',[x,x'']) =0$ for all $x,x',x''\in \mathfrak{g}$ (see \cite{Mi76} for details on ad-invariant metrics). This allows us to define a metric tensor $\mathfrak{G}= {{\displaystyle\sum_{\mu,\nu=1}^{d}}}g_{\mu\nu}dv^\mu\otimes dv^\nu$ on $\mathfrak{g}$ and a vector bundle isomorphism $\mathfrak{G}^\flat:e_x\in T\mathfrak{g}\mapsto \mathfrak{G}_x(e_x,\cdot)\in T^*\mathfrak{g}$ between the tangent and cotangent bundles of $\mathfrak{g}$ for all $e_x\in T_x\mathfrak{g}$ and every $x\in \mathfrak{g}$. Since $\mathfrak{G}$ is non-degenerate, $\mathfrak{G}^\flat$ has an inverse $\mathfrak{G}^\sharp:T^*\mathfrak{g}\rightarrow T\mathfrak{g}$. Let $E$ be the Euler vector field on $\mathfrak{g}$ generating dilatations, namely $E= {{\displaystyle\sum_{\alpha=1}^d}}v^\alpha\partial/\partial v^\alpha$. It is immediate that $E$ does not depend on the dual basis in $\mathfrak{g}^*$ used to define it, which turns $E$ into a geometric object. 
	
 Define 
	\begin{equation}\label{KB}
		\{f,h\}_K=\mathfrak{G}([\mathfrak{G}^\sharp (df),\mathfrak{G}^\sharp (dh)]_\mathfrak{X},E),\qquad \forall f,h \in C^\infty(\mathfrak{g}),
	\end{equation}
	where $[\cdot,\cdot]_\mathfrak{X}$ is an extension of the Lie bracket on $\mathfrak{g}$ to $\mathfrak{X}(\mathfrak{g})$. More specifically, if $c_{\alpha\beta}\,^\gamma$, with $\alpha,\beta,\gamma=1,\ldots,d$, are the structure constants of the Lie algebra $\mathfrak{g}$ in the basis $\{e_1,\ldots,e_d\}$, i.e. $[e_\alpha,e_\beta]= {{\displaystyle\sum_{\gamma=1}^d}}c_{\alpha\beta}\,^\gamma e_\gamma$  for $\alpha,\beta=1,\ldots,d$, then $[f\partial/\partial v^\alpha,g\partial/\partial v^\beta]_\mathfrak{X}= {{\displaystyle\sum_{\gamma=1}^d}} fgc_{\alpha\beta}\,^\gamma \partial/\partial v^\gamma$ for $\alpha,\beta=1,\ldots,d$ and every $f,g\in C^\infty(\mathfrak{g})$. The $[\cdot,\cdot]_\mathfrak{X}$ on $\mathfrak{X}(\mathfrak{g})$ is sometimes called a {\it  bundle of Lie algebras}. Let us prove that (\ref{KB}) recovers the expression of the Kirillov bracket given in \cite{BV90}. 
Expression (\ref{KB}) is antisymmetric and satisfies the Leibniz property. Let us prove that (\ref{KB}) fulfils the Jacobi identity. Since ${\bf g}$ is non-degenerate and $\{e_1,\ldots,e_d\}$ is a basis of $\mathfrak{g}$, the linear functions  $f_\alpha: x\in\mathfrak{g}\mapsto {\bf g}(e_\alpha,x)\in \mathbb{R}$, with $\alpha=1,\ldots,d$, form a coordinate system on $\mathfrak{g}$. In the coordinate system $\{f_1,\ldots,f_d\}$, expression (\ref{KB}) becomes
	$
	\{f_\alpha,f_\beta\}_K(x)={\bf g}([e_\alpha,e_\beta],x)=f_{[e_\alpha,e_\beta]}(x)
	$  for $\alpha,\beta=1,\ldots,d$ and every $x\in \mathfrak{g}$. Then, 
	$
	\{f_\alpha,\{f_\beta,f_\gamma\}_K\}_K(x)={\bf g} ([e_\alpha,[e_\beta,e_\gamma]],x)
	$ for $\alpha,\beta,\gamma=1,\ldots,d$ and $x\in \mathfrak{g}$. It follows that (\ref{KB}) satisfies the Jacobi identity for  any three functions chosen among $f_1,\ldots,f_d$. Due to this and the fact that (\ref{KB}) satisfies the Leibniz property,  (\ref{KB}) obeys the Jacobi identity for all functions on $\mathfrak{g}$. Hence,  (\ref{KB}) becomes a Poisson bracket, and its Poisson bivector  reads
	\begin{equation}\label{Kirillov}
		\Lambda_K=\frac 12\sum_{\alpha,\beta=1}^d\Lambda_K(df_\alpha,df_\beta)\frac{\partial}{\partial f_\alpha}\wedge \frac{\partial}{\partial f_\beta}=\frac 12\sum_{\alpha,\beta,\gamma=1}^dc_{\alpha\beta}\,^\gamma f_\gamma\frac{\partial}{\partial f_\alpha}\wedge \frac{\partial}{\partial f_\beta}.
	\end{equation}

	Recall that the vectors $\{e_1,\ldots, e_d\}$ can be considered as a coordinate system on the dual space $\mathfrak{g}^*$. The Kirillov-Kostant-Souriau (KKS) bracket on $\mathfrak{g}^*$ reads  (see \cite{Va94} for details)
 $$
	\Lambda=\frac 12\sum_{\alpha,\beta, \gamma=1}^dc_{\alpha\beta}\,^\gamma e_\gamma \frac{\partial}{\partial e_\alpha}\wedge \frac{\partial }{\partial e_\beta}.
	$$
	The diffeomorphism $\phi:x\in \mathfrak{g}\mapsto {\bf g}(x,\cdot)\in \mathfrak{g}^*$ yields that $\Lambda=\phi_*\Lambda_K$. Hence, (\ref{KB}) is induced by the KKS bracket on $\mathfrak{g}^*$.
	
	%It is therefore immediate that $\Lambda$ is a Poisson bracket.  %This provides a geometric interpretation of the Kirillov bracket. Note that given the diffeomorphism $\phi:g_0(v,\cdot)\in\mathfrak{g}^*\mapsto v\in\mathfrak{g}$, one obtains a new Poisson bracket on $\mathfrak{g}^*$ given by
	%$$
	%\{f_\alpha\circ \phi,f_\beta\circ \phi\}=g([g^\beta df_\alpha\circ \phi,g^\beta df_\beta\circ \phi],E)=g([g^\beta df_\alpha\circ \phi,g^\beta df_\beta\circ \phi],E)=
	%$$
	
	Let us use the fact that ${\bf g}$ is ${\rm ad}$-invariant to prove that the vector fields $X^{\rm ad}_1,\ldots,X^{\rm ad}_d$ on $\mathfrak{g}$ are Hamiltonian relative to $\Lambda_K$. The ${\rm ad}$-invariance of ${\bf g}$ gives that ${\displaystyle \sum_{\delta=1}^d}c_{\alpha\beta}\,^\delta g_{\delta \gamma}=-{\displaystyle \sum_{\delta=1}^d}c_{\gamma\beta}\,^\delta g_{\delta\alpha}$ for $\alpha,\beta,\gamma=1,\ldots,d$. If $g^{\alpha\beta}$ are the entries of the inverse matrix of the metric ${\bf g}$ in the basis $\{e_1,\ldots,e_d\}$, one gets that 
	$$
	\sum_{\delta,\gamma,\alpha=1}^dg^{\theta\alpha}c_{\alpha\beta}\,^\delta g_{\delta\gamma}g^{\gamma\pi}=-\sum_{\gamma,\alpha,\delta=1}^dc_{\gamma\beta}\,^\delta g_{\delta\alpha}g^{\gamma\pi}g^{\theta\alpha}\Longrightarrow \sum_{\alpha=1}^dg^{\theta\alpha}c_{\alpha\beta}\,^\pi=-\sum_{\gamma=1}^dc_{\gamma\beta}\,^\theta g^{\gamma\pi}.
	$$
	Renaming indices and rewriting slightly,
${\displaystyle \sum_{\beta=1}^d}g^{\gamma \beta}c_{\alpha\beta}\,^\delta=-{\displaystyle \sum_{\beta=1}^d}g^{\delta\beta}c_{\alpha\beta}\,^\gamma$ for every $\gamma,\alpha,\delta=1,\ldots,d$. Since $v^\mu={\displaystyle \sum_{\gamma=1}^d}g^{\mu\gamma}f_\gamma$, using (\ref{Kirillov}), and in view of previous results, we have that 
	$$
	\Lambda_K=\frac 12\sum_{\alpha,\beta,\gamma,\mu,\nu,\sigma=1}^dc_{\alpha\beta}\,^\gamma g_{\gamma\nu} v^\nu g^{\mu\alpha}g^{\beta \sigma}\frac{\partial}{\partial v^\mu}\wedge\frac{\partial}{\partial v^\sigma}=-\frac 12\sum_{\beta,\sigma,\mu,\nu=1}^dg^{\beta\sigma}v^\nu c_{\nu\beta}\,^\mu\frac{\partial}{\partial v^\mu}\wedge \frac{\partial}{\partial v^\sigma}.
	$$
	A short calculation shows that $\Lambda_K(dv^{\bar{\sigma}},\cdot) ={\displaystyle\sum_{\beta=1}^d}g^{\beta\bar {\sigma}}X_{e_\beta}^{\rm ad}$ for $\bar\sigma=1,\ldots,d$, and 
	then 
	\begin{equation}\label{Eq:HamKil}
	X^{\rm ad}_{e_\beta}={ \Lambda_K\left(d\displaystyle\sum_{\bar\sigma=1}^dg_{\beta\bar{\sigma}}v^{\bar{\sigma}},\cdot \right)}=\Lambda_K(df_\beta,\cdot)
	\end{equation}
	for every $\beta=1,\ldots,d$. 
	This proves that the Vessiot--Guldberg Lie algebra of (\ref{AdFLS}) consists of Hamiltonian vector fields on {$\mathfrak{g}$}  relative to the  Kirillov bracket on $\mathfrak{g}$. The same applies to the restriction of the Vessiot--Guldberg Lie algebra of (\ref{AdFLS}) and $\Lambda_K$ to $\mathcal{O}^{\rm ad}$. It is worth noting that, in view of (\ref{Eq:HamKil}), the characteristic distribution of $\Lambda_K$ is spanned by $X^{\rm ad}_1,\ldots,X^{\rm ad}_d$. Hence, the symplectic leaves induced by the Poisson bracket $\Lambda_K$ are indeed the integral strata of the distribution spanned by $X^{\rm ad}_1,\ldots,X^{\rm ad}_d$ and $\Lambda_K$ can be restricted to such strata.
	%Let us check the Leibniz identity relative to the product $\sum_{i=1}^{\dim\mathfrak{g}}(x^i\partial/\partial v^i)\star \sum_{j=1}^{\dim\mathfrak{g}}(y^j\partial/\partial v^j)=\sum_{i=1}^{\dim\mathfrak{g}}x^iy^i\partial/\partial v^i$ in the canonical basis (it is not base-independent). Then,
	%\begin{multline*}
	%\{f_X,f_{Y}f_{Z}\}_K=\{f_X,f_{Y\star Z}\}=g([X,Y\star Z],E)\\=g([X,Y]\star Z+[X,Z]\star Y,E)=\{f_X,f_Y\}_Kf_Z+f_Y\{f_X,f_Z\}_K.
	%\end{multline*}
	%The functions $f_{\partial/\partial v^\alpha}$ give rise to a coordinate system on $\mathfrak{g}$ and, in this basis, the bracket (\ref{KB}) is a Poisson bracket. Then, the bracket can be extended to $C^\infty(\mathfrak{g})$.
	
	Whether the Lax pair (\ref{AdFLS}) is  a Hamiltonian system or not relative to the Kirillov bracket on $\mathcal{O}^{\rm ad}$ is not much  relevant to us. In fact, it was proved in Section \ref{HOFSR} that our method to derive foliated superposition rules for (\ref{AdFLS}) requires to determine some common first-integrals for $X^{\rm ad}_{e_1},\ldots, X^{\rm ad}_{e_d}$ on the strata of $\mathcal{D}^{\rm ad}$ on $\mathcal{O}^{\rm ad}$ and their diagonal prolongations. 
	This can be achieved by using that these vector fields are Hamiltonian (see \cite{CLS13}). Moreover, (\ref{AdFLS}) can be restricted to the intersection of $\mathcal{O}^{\rm ad}$ with any leaf $\mathcal{S}$ of $\mathcal{D}^{\rm ad}$. Since $\mathcal{D}^{\rm ad}$ is regular with maximum rank on $\mathcal{O}^{\rm ad}$, a leaf $\mathcal{S}$ of $\mathcal{D}^{\rm ad}$, which always has a fixed dimension, is totally included in $\mathcal{O}^{\rm ad}$ or disjoint to it. We can also consider the restriction to some $\mathcal{S}\subset \mathcal{O}^{\rm ad}$ of  $X(t,x)={\displaystyle\sum_{\alpha=1}^d}f_\alpha(t,x)X^{\rm ad}_\alpha(x)$, where $x\in \mathcal{S}$, which is a Lie--Hamilton system.  Again, whether (\ref{AdFLS}) is a Hamiltonian system or not, {\it per se}, is not relevant.
	
	Let us consider now the automorphic foliated Lie system related to the foliated Lie system (\ref{SF}), considered on $\mathcal{O}^{\rm ad}$, relative to its Vessiot--Guldberg Lie algebra $V^{\rm ad}$. The analysis of this system will again support our idea about how one should endow stratified Lie systems with a compatible Poisson structure to study their properties. 
	
	Consider the Lie algebra $\mathfrak{g}^{lp}$ and its connected and simply connected Lie group $G^{LP}$. We consider the elements of $\mathfrak{g}^{lp}$ as left-invariant vector fields on $G^{LP}$ and we set $X^R_1,\ldots,X^R_{2n}$ to be the right-invariant vector field on $G^{LP}$ related to a basis $\{e_1,\ldots,e_{2n}\}$ of $\mathfrak{g}^{lp}$.  Let $r$ be an antisymmetric triangular $r$-matrix of $\mathfrak{g}^{lp}$, i.e. $r\in \Lambda^2\mathfrak{g}^{lp}$ and $[r,r]_{SN}=0$ with $[\cdot,\cdot]_{SN}$ being the Schouten-Nijenhuis bracket (see \cite{CP95,Ko97,Va94} for details), and define
	$$
	\Lambda_r=\frac 12\sum_{\alpha,\beta=1}^{2n}r^{\alpha\beta}X^R_\alpha\wedge X^R_\beta,
	$$
	where $r=\frac 12\sum_{\alpha,\beta=1}^{2n}r^{\alpha\beta}e_\alpha\wedge e_\beta$ and the vector fields $X^R_1,\ldots,X^R_{2n}$ satisfy that their non-zero commutation relations read $[X^R_{\alpha+n},X^R_{\alpha}]=-2X^R_\alpha$ for $\alpha=1,\ldots,n$.
	Then, $\Lambda_r$ is a Poisson bracket on $G^{LP}$ due to the fact that $r$ is a triangular $r$-matrix.  In particular,  consider the $r$-matrix $r={\displaystyle\sum_{\alpha=1}^n}e_\alpha\wedge h_{\alpha+n}$ in $\mathfrak{g}^{lp}$. This gives rise to a Poisson structure \begin{equation}\label{Eq:Pglp}\Lambda_r^{LP}={\displaystyle\sum_{\alpha=1}^n}X^R_\alpha\wedge X^R_{\alpha+n}
	\end{equation}
	 on $G^{LP}$.
	 
	Let us prove that the right-invariant vector fields $X_1^R,\ldots,X_n^R$ are  Hamiltonian relative to $\Lambda^{LP}_r$. Take the basis of right-invariant differential one-forms  $\{\eta_1^R,\ldots,\eta_{2n}^R\}$ on $G^{LP}$ that are dual  $\{X^R_1,\ldots,X^R_{2n}\}$. Then,
	$$
	d\eta^R_{\alpha}(X^R_\beta,X^R_\gamma)=-\eta^R_{\alpha}([X^R_\beta,X^R_\gamma]),\qquad \alpha,\beta,\gamma=1,\ldots,2n.
	$$
	Since $[X^R_\beta,X^R_\gamma]$ is a linear combination of $X^R_1,\ldots,X^R_n$, one gets that $d\eta^R_{\alpha}=0$ for $\alpha=n+1,\ldots,2n$. Since $G^{LP}$ is simply-connected, $\eta_\alpha^R=dk_\alpha$  for $\alpha=n+1,\ldots,2n$ and certain functions $k_{n+1},\ldots,k_{2n}$ on $\mathfrak{g}^{lp}$. Consequently, if follows from (\ref{Eq:Pglp}) that
	$$
	X^R_\alpha=-\Lambda_r^{LP}(dk_{\alpha+n},\cdot),\qquad \alpha=1,\ldots,n,
	$$
	are Hamiltonian vector fields relative to $\Lambda^{LP}_r$. 
	
	The $t$-dependent vector field  given by (\ref{SF}) on $\mathcal{O}^{\rm ad}$ is related via Theorem \ref{FLSAFLS}, when one considers that it admits the restriction of a Vessiot--Guldberg Lie algebra $V^{LP}\simeq \mathfrak{g}^{ lp}$ to $\mathcal{O}^{\rm ad}$, to the automorphic foliated Lie system 
	\begin{equation}\label{AutSpe}
		\frac{dg}{dt}=-\sum_{\alpha=1}^nf_\alpha(t,v_{n+1},\ldots,v_{2n})X^R_\alpha(g), \qquad g\in G^{LP},
	\end{equation}
	on the principal bundle $\pi^{PL}:G^{LP}\times M^{\rm ad}\rightarrow M^{\rm ad}$. This automorphic foliated Lie system
	admits a Vessiot--Guldberg Lie algebra $\langle X^R_1,\ldots, X^R_{n}\rangle $ of Hamiltonian vector fields relative to $\Lambda^{LP}_r$. Hence, a superposition rule relative to this Lie algebra can be obtained using the methods in \cite{CLS13}. Once again, one obtains that it is interesting to consider stratified Lie systems whose Vessiot--Guldberg Lie algebras are Hamiltonian relative to some Poisson bivector.
	
	Let us provide another example of foliated Lie system related to a Vessiot--Guldberg Lie algebra of Hamiltonian vector fields relative to a Poisson structure induced by a general $r$-matrix. 
	
	Recall that if $r=\frac 12{\displaystyle\sum_{\alpha,\beta=1}^{2n}}r^{\alpha\beta}e_\alpha\wedge e_\beta$ is  an $r$-matrix for $\mathfrak{g}^{lp}$, the {\it Sklyanin bracket} on $G^{LP}$ related to $r$ (see \cite{CP95}) is given by the Poisson bivector
	$$
	\Lambda^{S}=\frac 12\sum_{\alpha,\beta=1}^{2n}r^{\alpha\beta}(X^L_\alpha\wedge X^L_\beta-X^R_\alpha\wedge X^R_\beta).
	$$
	It can be proved that this Poisson bivector admits additionally properties (see \cite{CP95}), which justify to call it a {\it  Lie--Poisson bracket}. 
	It is well known that an $r$-matrix on $\mathfrak{g}$ induces a mapping $\delta_r:v\in \mathfrak{g}\mapsto [r,v]_{SN}\in \mathfrak{g}\wedge \mathfrak{g}$ whose transpose leads to a Lie algebra structure on $\mathfrak{g}^*$ and this, in turn, gives rise to a unique connected and simply connected Lie group $G^*$ related to $\mathfrak{g}^*$. Moreover, the $r$-matrix induces a unique Lie algebra structure, the {\it Drinfeld-double}, on the vector space $\mathfrak{d}=\mathfrak{g}\oplus \mathfrak{g}^*$ that is ${\rm ad}$-invariant,  reduces on $\mathfrak{g}$ to the original Lie algebra, and it reduces on $\mathfrak{g}^*$ to the Lie algebra induced by $r$ (see \cite{CP95} for details). Let $D$ be the connected and simply connected Lie group associated to the Lie algebra $\mathfrak{d}$. The embeddings $\mathfrak{g}\hookrightarrow \mathfrak{d}$ and $\mathfrak{g}^*\hookrightarrow \mathfrak{d}$ allow us to consider $G$ and $G^*$ as Lie subgroups of $D$. Moreover, every $f\in D$ in a close enough open neighbourhood $U$ of $e\in D$ can be written in a unique manner as the product of an element of $G$ by an element of $G^*$, in that order. In particular, $hg\in U$, with $h\in G^*$ and $g\in G$ admits a unique decomposition $hg=g^hh^g$ for $g^h\in G$ and $h^g\in G^*$.

	There exists for every element $\vartheta\in \mathfrak{g}^*$ a unique left- and right-invariant differential one-form on $G$, let us say $\eta^R_\vartheta,\eta^L_\vartheta$ respectively, whose values at $e$ are equal to $\vartheta$. Then, we define $X_\vartheta^{r,dr}=\Lambda^S(\eta^R_\vartheta,\cdot)$ and $X_\vartheta^{l,dr}=\Lambda^S(\eta^L_\vartheta,\cdot)$. It is known that the vector fields $X_\vartheta^{r,dr}$, with $\vartheta\in \mathfrak{g}^*$, span a Lie algebra $V^{dr}_l$ isomorphic to $\mathfrak{g}^*$. Its elements are called {\it left dressing vector fields}. The same applies to the vector fields $X_\vartheta^{l,dr}$, which generate a Lie algebra $V^{dr}_r$ isomorphic to $\mathfrak{g}^*$ whose elements are called {\it right dressing vector fields}. Both Lie algebras of vector fields can be integrated to obtain an, at least local, action of $G^*$ on $G$, the so-called {\it left} and {\it right dressing actions}, respectively. 
	
	%The connected and simply connected Lie group $D$ of the Lie algebra $\mathfrak{d}$ admits a decomposition  of the form $G\times G^*$.  
	Finally, let us define a last stratified Lie system related to a Poisson structure generalising autonomous Hamiltonian systems studied in \cite{CM10} (see that paper for details on further results).	Let us consider $\psi_{dr}:G^*\times G\rightarrow G$ with $\psi_{dr}(h,g)=g^h$ to be the left-dressing action\footnote{There exists a little misprint \cite[p. 1511, line 5]{CM10}, where it should be $\tilde h\in B$ instead of $\tilde h\in B^*$. This is a minor problem, but it can lead to a misunderstood in the following.}.  %The fundamental vector fields of $\psi_{dr}$ are called {\it dressing vector fields} and $\psi_{dr}$ is the referred to as {\it dressing action}.  %Let $X_1^{da},\ldots,X_r^{da}$ be a basis of dressing vector fields. It is also known that the Poisson bivector $\Lambda^S$ gives rise to a stratification by symplectic leaves of $G$ and the dressing vector fields are Hamiltonian vector fields relative to this action. Moreover, the strata of the stratification are given by the orbits of $G^*$ on $G$. 
	Consider the cotangent bundle $T^*G$, which is naturally diffeomorphic to $G\times \mathfrak{g}^*$ via the diffeomorphism $\vartheta_g \in T^*G\mapsto (g, T^*_eL_g\vartheta_g)\in G\times \mathfrak{g}^*$ for every $\vartheta_g\in T_g^*G$ and any $g\in G$. It is also well-known that $\psi_{dr}$ can be lifted to a new Lie group action $\widehat{\psi}_{dr}$ of $G^*$ on $T^*G\simeq G\times \mathfrak{g}^*$ (see \cite{AM87}). More particularly (see \cite[p. 1511, equation (1)]{CM10}), 
	$$
	\widehat{\psi}_{dr}(h,(g,\vartheta))=(\psi_{dr}(h,g),{\rm Ad}_{h^g}\vartheta),\quad \forall h\in G^*,\quad \forall g\in G,\quad \forall \vartheta\in \mathfrak{g}^*.
	$$
	This Lie group action has a momentum map $J:T^*G\simeq G\times \mathfrak{g}^*\rightarrow \mathfrak{g}$ obtained out of the fundamental vector fields of $\psi_{dr}$ as standardly known (see \cite{CM06} for details).
	Let $X^{dr}_1,\ldots,X^{dr}_r$ be a basis of fundamental vector fields for $\widehat{\psi}_{dr}$. One can define the $t$-dependent vector field on $G\times \mathfrak{g}^*$ of the form
	\begin{equation}\label{Eq:PseudoCol}
	X^{dr}(t,g,\vartheta)=\sum_{\alpha=1}^rf_\alpha(t,g,\vartheta)X^{dr}_\alpha(g,\vartheta),
	\end{equation}
	where $f_1(t,g,\vartheta),\ldots, f_r(t,g,\vartheta)$ are assumed to be common $t$-dependent constants of motion for $X^{dr}_1,\ldots,X^{dr}_r$, which are Hamiltonian vector fields with Hamiltonian functions $\langle J(g,\vartheta),e_\alpha\rangle$, with $\alpha=1,\ldots,r$, for a  basis $\{e_1,\ldots,e_r\}$ of $\mathfrak{g}$. Then, $X^{dr}$ is a stratified Lie system admitting a Vessiot--Guldberg Lie algebra of Hamiltonian vector fields. Systems of this type are related to collective Hamiltonian vector fields on $\mathfrak{g}^*$, which admit interesting applications (see \cite{CM10}). 
	
%	Consider the foliated Lie system related to  (\ref{FLS2}). Such a foliated Lie system admits a Vessiot--Guldberg Lie algebra of the form $V=\langle X^{\rm ad}_1,\ldots, X^{\rm ad}_{2n}\rangle$, where $[X^{\rm ad}_{\alpha+n},X^{\rm ad}_\alpha]=2X^{\rm ad}_\alpha$  for $\alpha=1,\ldots,n$. The automorphic foliated Lie system related to this system reads as (\ref{AutSpe}). 
%	Let us consider the $r$-matrix {$r={\displaystyle\sum_{\alpha=1}^n}e_\alpha\wedge h_\alpha$}. The Sklyanin Poisson bracket reads
%	$$
%	\Lambda^S=\sum_{\alpha=1}^n(X^L_\alpha\wedge X^L_{\alpha+n}-X^R_\alpha\wedge X^R_{\alpha+n}).
%	$$
%	
	
%	As before, we obtain $\Lambda^S\eta_{\alpha+n}^R=X_\alpha^R$ for $\alpha=1,\ldots,n$. Consequently,  (\ref{AutSpe})  admits a Hamiltonian Lie algebra of vector fields spanned by $X^R_1,\ldots,X^R_n$. A superposition rule can then be obtained by studying these vector fields. Their constants of motions can be derived  through known methods for Hamiltonian--Lie systems.

	Previous examples justify the following definition.

	\begin{definition} A {\it stratified Lie--Hamilton system} is a stratified Lie system $X$ on a manifold $N$ admitting a Vessiot--Guldberg Lie algebra of Hamiltonian vector fields relative to a Poisson structure on $N$.
	\end{definition}

	\section{Conclusions and Outlook}
	We have provided new applications of stratified and foliated Lie systems, which significantly extend the  examples given in \cite{CGM00}. We have introduced and studied foliated superposition rules and first-order systems of ODEs admitting foliated superposition rules. We have defined automorphic foliated Lie systems and their relations to foliated Lie systems have been analysed. 
	As applications, we have applied our techniques to a generalisation of Ermakov systems, we have illustrated our method to obtain foliated superposition rules, we have studied automorphic Lie systems related to Lax pairs and certain Hamiltonian systems, and the theory of Lie--Hamilton systems has been extended to stratified Lie--Hamilton systems.
	
	Our results can be extended to the so-called PDE Lie systems \cite{CGM07,Dissertationes,OG00,Ra11}. This can be accomplished by using the same fundamental ideas here depicted, but the development is technically much more complicated due to the nature of partial differential equations (PDEs). We are studying the possible applications of such a theory to physical models, which may justify their study. Moreover, we aim to look for generalisations of our ideas to collective systems related to (\ref{Eq:PseudoCol}), which could give rise to a generalisation of some results in \cite{CM10}. 
	
A natural generalisation of our techniques leads to analysing systems of ODEs given by a $t$-dependent vector field $X$ on a manifold $N$ so that $X$ is tangent a submanifold $\mathcal{S}\subset N$, where $X$ becomes a Lie system. This could lead to the analysis of  more general classes of systems of differential equations.

Several foliated Lie systems studied in the applications of this work are concerned with Hamiltonian systems admitting a maximal number of functionally independent autonomous constants of motion in involution relative to the Poisson bracket induced by the associated symplectic structure. Such systems can be understood as a $t$-dependent analogue of completely integrable Hamiltonian systems (see \cite{GMS02,Sa98,Sa12} for related notions). 
We aim to apply our methods as well as their possible generalisations to the analysis of $t$-dependent integrable non-commutative systems, which could be a generalisation of the previous ones admitting a maximal number of autonomous functionally independent constants of motion that need not be in involution (see \cite{Fa05,FT98,GMS02,LLV18,Ma12,SV00}). Finally, it is interesting to study how the theory of $r$-matrices can be applied to study the properties of certain Hamiltonian systems induced by them.

	\section*{Acknowledgements}
	D. Wysocki acknowledges partial financial support from the program Kartezjusz financed by the University of Warsaw and the Jagellonian University. Partial financial support by research projects  PGC2018-098265-B-C31 (MINECO, Madrid)  and DGA-E48/20R (DGA, Zaragoza)
	are acknowledged. J. de Lucas acknowledges funding
	from the Polish National Science Centre under grant HARMONIA 2016/22/M/ST1/00542.

\end{document}